\title{Verifying the Hashgraph Consensus Algorithm}
\affiliation{Carnegie Mellon University}
\author{Karl Crary}
\mathchardef\col="003A  
\mathchardef\semi="603B 
\mathchardef\bang="6021 
\mathchardef\lt="313C  
\mathchardef\gt="313E  
\newdimen\cascadeindent
\newdimen\cascdimen
\newcommand{\cascitem}{\\\global\advance\cascdimen by\cascadeindent\hspace{\cascdimen}}
\newcommand{\cascback}[1]{\\\global\advance\cascdimen by-#1.0\cascadeindent\hspace{\cascdimen}}
\newcommand{\bigleft}{\left(\begin{array}{@{}l@{}}}
\newcommand{\bigright}{\end{array}\right)}
\newdimen\proofadjustabove
\newdimen\proofadjustbelow
\newenvironment{proof}{\vspace{\proofadjustabove}\noindent\rm{\bf Proof}\vspace{\proofadjustbelow}\begin{list}{}{\leftmargin=1.5em\labelsep=1.5em\labelwidth=0pt}\item}{\end{list}}
\newenvironment{proofsketch}{\vspace{\proofadjustabove}\noindent\rm{\bf Proof Sketch}\vspace{\proofadjustbelow}\begin{list}{}{\leftmargin=1.5em\labelsep=1.5em\labelwidth=0pt}\item}{\end{list}}
\newtheorem{theorem}{Theorem}[section]
\newtheorem{lemma}[theorem]{Lemma}
\newtheorem{corollary}[theorem]{Corollary}
\newtheorem{definition}[theorem]{Definition}
\newenvironment{defn}{\begin{definition}\rm}{\end{definition}}
\chardef\at="40
\chardef\tilde="7E
\chardef\caret="5E
\chardef\lb="7B
\chardef\rb="7D
\newcommand{\figfoot}{\vspace{1ex}\hrule}
\newcommand{\fighead}{\hrule\vspace{1.5ex}}
\newcommand{\zilch}{\mbox{}}
\newcommand{\anc}{\leq}
\newcommand{\modulo}{\mathbin{\it mod}}
\newcommand{\sanc}{\lt}
\newcommand{\sbar}{\:|\:}
\newcommand{\see}{\trianglelefteq}
\newcommand{\selfanc}{\sqsubseteq}
\newcommand{\selfsanc}{\sqsubset}
\newcommand{\stsee}{\ll}
\newcommand{\twothirds}{\nicefrac{2}{3}}
\newcommand{\backward}{\mbox{\sf backward}}
\newcommand{\coin}{{\sf coin}}
\newcommand{\creator}{{\sf creator}}
\newcommand{\decide}{{\sf decide}}
\newcommand{\election}{{\sf election}}
\newcommand{\no}{{\sf no}}
\newcommand{\round}{{\sf round}}
\newcommand{\vote}{{\sf vote}}
\newcommand{\yes}{{\sf yes}}
\begin{document}

\maketitle{}

\begin{abstract}
The Hashgraph consensus algorithm is an algorithm for asynchronous Byzantine fault tolerance
intended for distributed shared ledgers.  Its main distinguishing
characteristic is it achieves consensus without exchanging any extra
messages; each participant's votes can be determined from public
information, so votes need not be transmitted.

In this paper, we discuss our experience formalizing the Hashgraph
algorithm and its
correctness proof using the Rocq proof assistant.  The paper is
self-contained; it includes a complete discussion of the algorithm and
its correctness argument in English.
\end{abstract}

\section{Introduction}

Byzantine fault-tolerance is the problem of coordinating a distributed
system while some participants may maliciously break the rules.  
We are particularly interested in the asynchronous version, in which
the weakest possible assumptions are made regarding the network.
The problem is at the center of a variety of new applications such as
cryptocurrencies.  Such applications rely on {\em distributed shared
ledgers,} a form of Byzantine fault-tolerance in which a set of
transactions are assigned a place in a globally-agreed total order
that is {\em immutable.} The latter means that once a transaction
enters the order, no new transaction can enter at an earlier position.

A distributed shared ledger makes it possible for all participants to
agree, at any point in the order, on the current owner of a digital
commodity such as a unit of cryptocurrency.  A transaction
transferring ownership is valid if the commodity's current owner
authorizes the transaction.  (The authorization mechanism---presumably
using a digital signature---is beyond the scope of the ledger itself.)
Because the order is total, one transaction out of any pair has
priority.  Thus we can show that a commodity's chain of ownership is
uniquely determined.  Finally, because the order is immutable, the
chain of ownership cannot change except by adding new transactions at
the end.

Algorithms for Byzantine consensus (under various assumptions) have
existed for some time, indeed longer than the problem has been
named~\cite{pease+:agreement-in-the-presence-of-faults,lamport+:byzantine-generals}.
Practical algorithms are more recent; in 1999, Castro and
Liskov~\cite{castro+:practical-byzantine-fault-tolerance} gave an
algorithm that when installed into the NFS file system slowed it only 3\%.
As Byzantine consensus algorithms have become more practical, they
have been tailored to specific applications.  Castro and Liskov's
algorithm was designed for fault-tolerant state machine
replication~\cite{schneider:state-machine-approach} and probably would not perform well under
the workload of a distributed shared ledger.

However, recently there have arisen asynchronous, Byzantine fault-tolerance
algorithms suitable for distributed shared ledgers, notably
HoneyBadgerBFT~\cite{miller+:honey-badger}, BEAT~\cite{duan+:beat},
and---the subject of this paper---Hashgraph~\cite{baird:hashgraph}.
The former two each claim to be the first practical
asynchronous BFT algorithm (with differing measures of
practicality).  Hashgraph does not claim to be first, but is also
practical.

In parallel with that line of work has been the development of
distributed shared ledgers based on {\em proof of work,} beginning
with Bitcoin~\cite{nakamoto:bitcoin}.  The idea behind proof of work
is to maintain agreement on the ledger by maintaining a list of blocks
of transactions, and to ensure that the list does not become a tree.
To ensure this, the rules state that (1) the longest branch defines
the list, and (2) to create a new block, one must
first solve a mathematical problem that takes the list's old head as one
of its inputs.  The problem's solution is much easier to verify than to obtain,
so when one learns of a new block, one's incentive is to restart work
from the new head rather than continue work from the old head.

Bitcoin and some of its cousins are widely used, so in a certain sense
they are indisputably practical.  They are also truly permissionless,
in a way that the BFT algorithms, including Hashgraph, cannot quite
claim.  Nevertheless, they offer severely limited throughput.  Bitcoin
is limited to seven transactions per second and has a latency of one
hour,\footnote{other proof-of-work systems do a little better} while
its BFT competitors all do several orders of magnitude
better.  Proof-of-work systems are also criticized for being wasteful:
an enormous amount of electricity is expended on block-creation
efforts that nearly always fail.  Finally---more to the point of this
paper---the theoretical properties of proof of work are not well
understood.

\medskip

The Hashgraph consensus algorithm is designed to support high-performance applications of a
distributed shared ledger.  Like the other BFT systems, it is several
orders of magnitude faster than proof of work.  Actual performance
depends very much on configuration choices ({\em e.g.,} how many
peers, geographic distribution, tradeoff between latency and
throughput, etc.), but in all configurations published in Miller, {\em
et. al}~\cite{miller+:honey-badger} (for HoneyBadgerBFT) and Duan, {\em et
al.}~\cite{duan+:beat} (for BEAT), the Hashgraph algorithm equals or exceeds the published
performance figures~\cite{baird+:hashgraph-measurements}.
A frequently cited throughput goal is to equal the Visa credit-card
network.  According to Visa's published figures, Hashgraph can handle
Visa's average load\footnote{3200 transactions per second in
2015~\cite{visa:average-tps-2015}.} and is in the ballpark of Visa's
claimed surge capacity.\footnote{65,000 transactions per second in
2017~\cite{visa:surge-tps-2017}.}

The key to the Hashgraph algorithm's performance is achieving nearly zero
communications overhead.  Previous BFT systems exchange messages to
achieve consensus, but Hashgraph does not.  In Hashgraph, no peer
sends any messages it did not intend to send anyway.  Moreover, the
overhead in each message it does send is light; each message consists mostly of
transactions that are new to the recipient.  In addition to the
transaction payload, a message contains an array of sequence numbers
(to keep track of which blocks each peer has seen), and the
information needed by consensus: just two hashes, a digital
signature, and a timestamp.

In the Hashgraph algorithm, peers achieve
consensus by voting, but each vote is fully determined by publicly
available information, so peers can determine each other's votes
without communicating with them.  This allows the consensus election
to be carried out {\em virtually,} with no extra messages exchanged.
There is no magic here; it still takes multiple cycles of
communication to achieve consensus, but every message is just an
ordinary block filled with new transactions.  Thus, consensus
is essentially invisible from a throughput perspective.

Another advantage of the Hashgraph algorithm is it requires no sophisticated
cryptography.  The only requirements are digital signatures and
cryptographic hashes, both of which are now commonplace and have
highly optimized implementations available.

\paragraph{This work}

The Hashgraph consensus algorithm has been realized as a large-scale,
open-access, commercial system called Hedera~\cite{baird+:hedera}, so
there is considerable interest in machine-verifying that it is correct.  In
this paper, we discuss the first steps toward doing so.  Using the Rocq
proof assistant~\cite{coq}, we formalized the batch-mode algorithm given in
Baird~\cite{baird:hashgraph} and developed a machine-checkable proof
of its correctness.  As usual when one formalizes a human-language
proof, we found a few errors, but they were minor and easily
corrected.  The algorithm implemented in Hedera is an online
algorithm, inspired by the batch-mode algorithm discussed here, but
obviously a bit different.\footnote{As it happens, none of the minor
  errors we found appear to affect the implemented online algorithm.}  We will
discuss some of the differences in
Section~\ref{sec:future-work}.

We begin by giving an informal overview of the algorithm, to build
intuition.  Then we give a human-language mathematical definition of the algorithm
and prove its properties, in Section~\ref{sec:algorithm}.
We discuss the formalization in Rocq starting in
Section~\ref{sec:formalization}.

\subsection{Hashgraphs in overview}

A {\em hashgraph\/} is a directed graph that summarizes who has said
what to whom.  Each peer maintains a hashgraph reflecting the
communications it is aware of.  In general, each peer knows a
different subset of the true graph, but because of digital signatures
and cryptographic hashes, they cannot disagree about the information
they have in common.

The nodes of a hashgraph are {\em events.}  Each event is created by a
particular peer.  Except for each peer's initial event, each event has
two parent events.  One parent has the same creator (we call that one the
{\em self-parent}), and the other has a different creator (we call it
the {\em other-parent}).  Honest peers do not create {\em forks},
where a fork is defined as two events with the same creator in which
neither is a self-ancestor of the other.\footnote{A self-ancestor is
an ancestor using only self-parent edges.}  In other words, the events
created by an honest peer will form a chain.

We can visualize a hashgraph as shown in Figure~\ref{fig:graph}.  In
this example, all peers are behaving honestly.

\begin{figure}
\hfill
\begin{tikzpicture}[xscale=1.5]
\newcommand{\mkevent}[1]{node[circle,inner sep=1.5pt,draw] (#1) {\tiny #1}}
\newcommand{\mkpeer}[2]{(#2,5) node (#1) {#1}}
\path(0,0) \mkevent{A1}
++(0,1) \mkevent{A2}
\mkpeer{Alice}{0};
\path(1,0) \mkevent{B1}
++(0,1) \mkevent{B2}
++(0,1) \mkevent{B3}
++(0,1) \mkevent{B4}
++(0,1) \mkevent{B5}
\mkpeer{Bob}{1};
\path(2,0) \mkevent{C1}
++(0,1) \mkevent{C2}
++(0,1) \mkevent{C3}
\mkpeer{Cathy}{2};
\path(3,0) \mkevent{D1}
++(0,2) \mkevent{D2}
\mkpeer{Dave}{3};
\draw (A1) -- (A2);
\draw[densely dashed] (Alice) -- (A2);
\draw (B1) -- (B2) -- (B3) -- (B4) -- (B5);
\draw[densely dashed] (Bob) -- (B5);
\draw (C1) -- (C2) -- (C3);
\draw[densely dashed] (Cathy) -- (C3);
\draw (D1) -- (D2);
\draw[densely dashed] (Dave) -- (D2);
\draw (A1) -- (B2);
\draw (B1) -- (A2) -- (B3);
\draw (B1) -- (C3) -- (B4);
\draw (D1) -- (C2) -- (D2) -- (B5);
\path (4,2) node (Time) {Time};
\draw[->] (Time) -- (4,3);
\end{tikzpicture}
\caption{A Hashgraph}
\label{fig:graph}
\figfoot
\end{figure}

In the Hashgraph network, each peer periodically chooses another
at random and sends that peer its latest event.  The recipient then
creates a new event, with its own latest event as self-parent and the
event it just received as other-parent.  Every event is digitally
signed, so there can be no dispute over who created it.  Each event
also contains the hashes of its two parents, so there is no dispute
over parentage either.  The recipient of the event will request from
the sender any of the event's ancestors that it does not have.  (For
simplicity, we will ignore that part of the protocol in what follows.)
The recipient also puts a timestamp into its new event, which is
ultimately used to determine a consensus timestamp for other events.

Finally, each event contains a payload of transactions.  When a peer
wishes to inject a new transaction into the network, it stores it in a
buffer of outgoing transactions.  The next time it creates an event
(this happens multiple times per second), it uses the contents of its
buffer as the new event's payload.  Transactions are just along for
the ride in the consensus algorithm, so we will discuss them little.

In the example, Dave sent D1 to Cathy, resulting in Cathy creating
C2.  Then Cathy sent C2 back to Dave, resulting in Dave creating D2.
Bob sent B1 to both Alice and Cathy, resulting in A2 and C3.  At about
the same time, Alice sent A1 to Bob, resulting in B2.  Alice sent A2
to Bob, resulting in B3.  Cathy sent C3 to Bob, resulting in B4.
Finally, Dave sent D2 to Bob, resulting in B5.

The algorithm partitions events into rounds, in a manner that is easy
for all peers to agree on.  The first event created by a peer in a
round is called a {\em witness.}  (Note that dishonest peers may have
multiple witnesses in a single round.)  A witness that is quickly
propagated to most peers is called {\em famous.}  Identifying the
famous witnesses is the main job of the consensus algorithm.

Each round, the algorithm selects one famous witness from each peer
that has one to be a {\em unique famous witness.}
An honest peer will have at most one witness per round, so if that
witness turns out to be famous, it will also be unique.  If a
dishonest peer happens to have multiple famous witnesses, one of them
is selected.

We say that an event has been {\em received by the network\/} in the
first round in which it is an ancestor of all the unique famous
witnesses.  The round an event is received is the primary determiner of an event's
place in the order.  Ties are broken using a consensus timestamp that
is computed using the unique famous witnesses.  Any remaining ties are
broken in an arbitrary but deterministic way.  Finally, the ordering
of transactions is determined by the ordering of the events in which
they reside, with transactions in the same event ordered by their
position in the payload.

\section{The Algorithm}
\label{sec:algorithm}

We will begin by reviewing the Hashgraph consensus
algorithm~\cite{baird:hashgraph}, and defer discussion of the
formalization until Section~\ref{sec:formalization}.  We will develop
the algorithm in pieces, establishing the properties of those pieces
as we go.  But first, the algorithm relies on the following
assumptions:

\begin{enumerate}
\item
The network is asynchronous.  However, the adversary cannot disconnect
honest peers indefinitely; every honest peer will eventually communicate
with every other honest peer.

\item
Every peer can determine any event's creator and parents.  In the real
world, this assumption means that the adversary has insufficient
computing power to forge a digital signature, or to find two events
with the same hash.

\item
A supermajority (defined to mean more than two-thirds) of
the peers are honest.

\item
The coin mechanism discussed below satisfies a probability
assumption.  (Section~\ref{sec:termination}.)

\item
Various uncontroversial mathematical assumptions hold.  These are
discussed in Section~\ref{sec:formalization}.  For example, the
parent relation is well-founded.  These assumptions would ordinarily
go unremarked, but formal verification requires that they be
specified.
\end{enumerate}

For convenience, we will also assume that there are at least two
peers.  (If there is only one peer, the consensus problem is trivial.)

Suppose $x$ and $y$ are events.  We will write $x \anc y$ when $x$ is
a (non-strict) ancestor of $y$, and $x \sanc y$ when $x$ is a strict
ancestor of $y$.  We say $x$ is a self-ancestor of $y$ (and write $x
\selfanc y$) when $x$ is an ancestor of $y$ using only self-parent
edges, and we write $x \selfsanc y$ for the strict version.  In all
cases, note that the older event is on the left.

We will refer to events created by an honest peer as {\em honest
  events.}  We will refer to a set of events whose creators constitute
a supermajority of the peers as a {\em supermajor set.}  We will take
$N$ to be the total number of peers.

\subsection{Seeing}

Recall that events $x$ and $y$ form a {\em fork\/} if they have the
same creator, and if $x \not\selfanc y$ and $y \not\selfanc x$.  Note
that forks are defined so that no event forms a fork with itself.  The
main property of honest peers is they never create forks.

We say that $y$ {\em sees\/} $x$ (written $x \see y$) if (1) $x \anc
y$ and (2) there does not exist any fork $z, z'$ such that $z, z' \anc
y$ and $\creator(x) = \creator(z)$.  Sees is the same as ancestor except that 
when an event observes a fork, it blacklists the fork's creator and will see none of
the fork's creator's events.

This brings us to the algorithm's main technical concept:

\begin{defn}
We say that $y$ {\em strongly sees\/} $x$ (written $x \stsee y$) if there
exists a supermajor set $Z$, such that for all $z \in Z$, $x \see z
\anc y$.
\end{defn}

Informally, for $y$ to strongly see $x$ means that $x$ has made its
way to most of the network, within the subgraph visible to $y$, with
hardly anyone observing $x$'s creator cheat.

For example, consider Figure~\ref{fig:graph}.  Every peer is behaving
honestly, so every ancestor is seen.  Then we can say that B4
strongly sees B1, using A2, B4, and C3 as intermediaries.  (B4 has no
intermediary on Dave, but it does not need one.  Three intermediaries
is enough, since $3 > \twothirds \cdot 4$.)
B4 also strongly sees D1, using B4, C3, and D1
as intermediaries.  B4 does not strongly see A1 or C1, as it has only
two intermediaries for each.  However, B5 does strongly see C1, using
intermediaries B5, C2, and D2.  In most of these cases, other choices
could be made for the intermediaries as well.

We can state some useful properties of strongly-seeing:

\begin{lemma}
\label{lem:stsee-properties}
\zilch
\begin{enumerate}
\item
If $x \stsee y$ then $x \sanc y$.

\item
If $x \selfanc y \stsee z$ then $x \stsee z$.

\item
If $x \stsee y \anc z$ then $x \stsee z$.
\end{enumerate}
\end{lemma}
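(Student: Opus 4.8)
The plan is to treat the three parts separately. Parts~(2) and~(3) will need no new witness: the \emph{same} supermajor set that witnesses the hypothesis also witnesses the conclusion, so the only work there is re-checking the two clauses of the definition of $\see$. Part~(1) is the part with real content; for it I would first record a small fact about supermajor sets, namely that every supermajor set spans at least two distinct creators. Indeed, a supermajor set has strictly more than $\twothirds N$ creators, and since $N \geq 2$ this count must be at least $2$.

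For part~(1), suppose $x \stsee y$ and let $Z$ be a witnessing supermajor set. Pick $z_1, z_2 \in Z$ with $\creator(z_1) \neq \creator(z_2)$. Each $z_i$ satisfies $x \see z_i \anc y$, and unfolding $\see$ gives $x \anc z_i \anc y$, hence $x \anc y$. For strictness, suppose toward a contradiction that $x = y$; then $z_i \anc x$ and $x \anc z_i$, so $z_i = x$ by antisymmetry of $\anc$ (an order property inherited from well-foundedness of the parent relation), and therefore $\creator(z_1) = \creator(x) = \creator(z_2)$, contradicting the choice of $z_1, z_2$. Thus $x \neq y$, that is, $x \sanc y$.

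For part~(2), suppose $x \selfanc y$ and $y \stsee z$, the latter witnessed by a supermajor set $Z$; I claim the same $Z$ witnesses $x \stsee z$. Fix $w \in Z$, so that $y \see w$ and $w \anc z$; it remains to show $x \see w$. The ancestor clause holds since $x \anc y \anc w$, using that a self-ancestor is in particular an ancestor. The fork clause is unchanged: $x \selfanc y$ forces $\creator(x) = \creator(y)$, and the fork clause of ``$a \see b$'' refers to $a$ only through $\creator(a)$, so it is the identical requirement for $x \see w$ as for $y \see w$, which holds. Hence $x \see w \anc z$ for every $w \in Z$, giving $x \stsee z$.

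For part~(3), suppose $x \stsee y$ is witnessed by a supermajor set $Z$ and $y \anc z$. The same $Z$ witnesses $x \stsee z$: for every $w \in Z$ we already have $x \see w$, and $w \anc y \anc z$ yields $w \anc z$ by transitivity of $\anc$. I expect the only mildly delicate point of the whole lemma to be the opening observation of part~(1)---that a supermajor set spans at least two distinct creators---since it leans on the standing assumption $N \geq 2$ and on the precise reading of ``supermajority''; everything else is definition-unfolding together with the facts that the fork clause of $\see$ reads its left argument only through the creator and that $\anc$ is a transitive, antisymmetric relation.
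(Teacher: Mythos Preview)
Your proof is correct and follows essentially the same approach as the paper's. The only minor variation is in part~(1): the paper picks a single intermediary $z$ with $\creator(z) \neq \creator(y)$ and concludes $z \sanc y$ directly, whereas you pick two intermediaries with distinct creators and argue by contradiction using antisymmetry of $\anc$; both arguments rest on the same observation that a supermajority spans at least two creators (which needs $N \geq 2$), and parts~(2) and~(3) match the paper exactly.
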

\begin{proof}
For (1), suppose $x \stsee y$.  We can immediately see that $x \anc
y$, but we also have $x \neq y$.  Since $y$ strongly sees $x$, there must be
intermediaries on a supermajority of peers, at least one of which must
not be $y$'s creator.  (Since we assume there are at least two peers,
a supermajority is at least two.)  Let $z$ be one such.  Then $x \see
z \sanc y$, so $x \sanc y$.

For (2), observe that $x \selfanc y \see z$ implies $x \see z$, since
$x$ and $y$ have the same creator, so any fork on $x$'s creator is
a fork on $y$'s creator.  (3) is immediate by transitivity of ancestor.
\end{proof}

We can now state the main technical lemma, which states that at most
one side of a fork can be strongly seen, even by different events:

\begin{lemma}[Strongly Seeing]
\label{lem:strongly-seeing}
Suppose $x \stsee v$ and $y \stsee w$.  Then $x$ and $y$ do not form a
fork.
\end{lemma}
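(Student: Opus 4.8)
The plan is to argue by contradiction. Suppose $x$ and $y$ \emph{do} form a fork; then in particular $\creator(x) = \creator(y)$. Unfolding $x \stsee v$ yields a supermajor set $Z_1$ with $x \see z \anc v$ for every $z \in Z_1$, and unfolding $y \stsee w$ yields a supermajor set $Z_2$ with $y \see z' \anc w$ for every $z' \in Z_2$. The contradiction I am aiming for is with the ``sees'' condition buried inside one of these: since ``sees'' blacklists a creator as soon as a fork on that creator is observed, an event that sees $y$ cannot also have $x$ as an ancestor once $x,y$ is a fork.

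The first key step is a counting argument. The creators of $Z_1$ and the creators of $Z_2$ each comprise more than $\twothirds$ of the $N$ peers, so their intersection exceeds $N/3$ in size; since fewer than $N/3$ peers are dishonest, the intersection contains at least one honest peer $p$. Choose $z \in Z_1$ and $z' \in Z_2$ both created by $p$, so that $x \see z$ and $y \see z'$. The second key step uses $p$'s honesty: $z$ and $z'$ have the same creator and $p$ creates no forks, so one is a self-ancestor of the other; say $z \selfanc z'$, the other case being symmetric. Then $z \anc z'$, and from $x \see z$ (hence $x \anc z$) and transitivity we get $x \anc z'$, while $y \see z'$ gives $y \anc z'$. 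Now $x$ and $y$ are both ancestors of $z'$, they have equal creators, and by assumption they form a fork — contradicting the definition of $y \see z'$, which forbids exactly such a fork on $\creator(y)$'s line.

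The conceptual heart, and the step I expect to be the main obstacle to state precisely, is the interplay between the counting argument and the blacklisting in ``sees'': one needs the common intermediary to be genuinely \emph{honest} (not merely shared between $Z_1$ and $Z_2$), so that $z$ and $z'$ are linearly ordered, and one must keep track of which of $z, z'$ is the later event so that its ``sees'' condition is the one that witnesses the contradiction. Beyond that, everything is routine unfolding of definitions together with transitivity of $\anc$ (as in Lemma~\ref{lem:stsee-properties}).
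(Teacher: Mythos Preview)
Your proof is correct and follows essentially the same approach as the paper's: intersect the two supermajor intermediary sets with the set of honest peers to obtain an honest common creator, use honesty to linearly order the two intermediaries by $\selfanc$, and then let the ``sees'' blacklisting at the later intermediary witness the contradiction with the assumed fork. The only cosmetic differences are that you frame it explicitly as a proof by contradiction and spell out the counting as two pairwise intersections rather than the paper's ``any three supermajorities meet,'' but the argument is the same.
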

\begin{proof}
Let $Z = \{ z \sbar x \see z \anc v \}$ and $Z' = \{ z' \sbar y \see
z' \anc w \}$.  By the definition of strongly-sees, $Z$ and $Z'$ are
supermajor sets.  Also recall
that a supermajority of peers are honest.  Any three supermajorities
have an element in common.  Therefore there exist events $z \in Z$ and
$z' \in Z'$ such that $z$ and $z'$ share the same creator, and that creator
is honest.  Since their creator is honest, either $z \selfanc z'$ or
$z' \selfanc z$.  Assume the latter.  (The former case is similar.)
Then $x, y \anc z$.  Thus $x$ and $y$ cannot form a fork, since $x
\see z$.
\end{proof}

The definition of strongly-sees limits the influence of dishonest
peers, since at most one side of a fork can be strongly seen.  This is
helpful in various ways; one important way is it prevents dishonest
peers from obtaining extra votes in an election by creating extra
witnesses, since votes must be strongly seen to count.

\subsection{Rounds}
\label{sec:rounds}

Every event is assigned to a {\em round.}  (Note that the round an
event belongs to is different from the round that it is received by
the network.)  Initial events belong to round zero.  For a
non-initial event $x$, let $i$ be the maximum of $x$'s parents'
rounds.  Then $x$ belongs to round $i$, unless $x$ strongly sees
events in round $i$ on a supermajority of peers, in which case $x$
belongs to round $i + 1$.

For example, consider Figure~\ref{fig:graph}.  A1, B1, C1, and D1 are
initial events, so they belong to round 0.  For every other event, the
maximum of the parents' rounds is 0.  Thus, every other event also
belongs to round 0, except B5.  As noted above, B4 can strongly see
B1 and D1, but not A1 or C1, so it remains in round 0.  However, B5
strongly sees B1, C1, and D1, so it advances to round 1.

The first event in a round on each peer (or first events, in the case of a
dishonest peer) are called {\em witnesses.}  Witnesses are the events
that can cast votes in an election.

In Figure~\ref{fig:graph}, the witnesses are A1, B1, C1, and D1
(round~0), and B5 (round 1).

Note that if a peer has not heard from anyone in a while, it is
possible for it to skip one or more rounds when it catches up.  Thus,
a peer might not have a witness in any particular round.  Conversely,
a dishonest peer can have multiple witnesses in a round, but (by
Lemma~\ref{lem:strongly-seeing}) at most one of them can be strongly
seen.

\smallskip

Two important properties of rounds follow more-or-less directly:

\begin{lemma}
\label{lem:round-elim}
Suppose $i < j$ and $x$ belongs to round $j$.  Then there exists a supermajor
set of round $i$ witnesses $W$, such that for all $w \in W$, $w \stsee x$.
\end{lemma}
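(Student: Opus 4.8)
The plan is to argue by well-founded induction on $x$ with respect to the strict ancestor relation $\sanc$, which we have assumed to be well-founded. Fix $i$ and suppose the statement holds for every strict ancestor of $x$ lying in a round greater than $i$; I will derive it for $x$, given that $x$ belongs to some round $j > i$. Since $j \geq 1$, $x$ is not an initial event, so it has two parents; let $k$ be the larger of their rounds, so that by the round rule $j$ is either $k$ or $k+1$.

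First I would dispose of the cases that reduce to the induction hypothesis. If $j = k$, then some parent $y$ of $x$ is in round $j > i$; if instead $j = k+1$ and $k > i$, then by the round rule $x$ strongly sees some event $z$ in round $k > i$, and $z \sanc x$ by Lemma~\ref{lem:stsee-properties}(1). In either situation $x$ has a strict ancestor $u$ --- namely $y$ or $z$ --- lying in a round greater than $i$. Applying the induction hypothesis to $u$ produces a supermajor set $W$ of round-$i$ witnesses with $w \stsee u$ for every $w \in W$; since $u \anc x$, Lemma~\ref{lem:stsee-properties}(3) upgrades this to $w \stsee x$, and $W$ is the set we want.

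The one remaining case, $j = k+1$ with $k = i$, is the heart of the argument. Here the round rule says that $x$ strongly sees round-$i$ events whose creators form a supermajority $S$; for each $p \in S$ choose a round-$i$ event $z_p$ created by $p$ with $z_p \stsee x$. The key step is to replace $z_p$ by a round-$i$ \emph{witness} of the same creator. Following self-parent edges downward from $z_p$ yields a chain (self-parents are unique) along which rounds are non-increasing and start at $i$; let $w_p$ be the earliest event on this chain still in round $i$. Then the self-parent of $w_p$, if it exists, is in a round below $i$, so $w_p$ is a round-$i$ witness; it has creator $p$ and satisfies $w_p \selfanc z_p$, so $w_p \stsee x$ by Lemma~\ref{lem:stsee-properties}(2). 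Taking $W = \{\, w_p \mid p \in S \,\}$ gives a supermajor set of round-$i$ witnesses each strongly seen by $x$, completing the induction.

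I expect the main obstacle to be setting up the right induction: because an event that does not advance a round shares a round with one of its parents, one cannot simply induct on $j - i$, and the well-founded ancestor order must be used instead, with the honest/dishonest distinction never entering. The only other point needing care is the witness-replacement step above, which rests on self-parent edges being functional --- so that the self-ancestors of a fixed event form a chain --- together with monotonicity of rounds along those edges; the rest is routine application of Lemma~\ref{lem:stsee-properties}.
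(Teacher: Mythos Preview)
Your argument is correct. The paper does not actually give a proof of this lemma, stating only that it follows ``more-or-less directly'' from the definition of rounds; your well-founded induction on the strict-ancestor order, with the witness-replacement step via self-parent chains and Lemma~\ref{lem:stsee-properties}(2), is the natural way to fill in those details.
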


\begin{lemma}
\label{lem:later-round}
Suppose $x$ and $y$ are events.  If there exists a supermajor set $W$
such that for all $w \in W$, $x \anc w \stsee y$, then $y$'s round is
strictly later than $x$'s round.
\end{lemma}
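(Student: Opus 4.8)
The plan is to exhibit a supermajor set of witnesses in $x$'s round that $y$ strongly sees, and then read off $\round(y) > \round(x)$ from the definition of round. Write $r = \round(x)$. Since $W$ is a supermajor set it is nonempty, and for $w \in W$ we have $w \stsee y$, hence $w \sanc y$ by Lemma~\ref{lem:stsee-properties}(1); in particular $y$ has a strict ancestor and so is non-initial. Moreover round is monotone along ancestor edges (an easy induction on the length of the ancestor path, using the round definition at each parent step), so $x \anc w$ gives $\round(w) \ge r$ for every $w \in W$.

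Next I would split on whether some $w \in W$ has $\round(w) > r$. If one does, fix it; applying Lemma~\ref{lem:round-elim} with $i := r$ and $j := \round(w)$ yields a supermajor set $W'$ of round-$r$ witnesses with $w' \stsee w$ for all $w' \in W'$. Since $w \stsee y$ gives $w \anc y$, Lemma~\ref{lem:stsee-properties}(3) upgrades each $w' \stsee w$ to $w' \stsee y$, so $y$ strongly sees the supermajor set $W'$ of round-$r$ witnesses. Otherwise every $w \in W$ has $\round(w) = r$ exactly; then for each such $w$ let $u_w$ be the oldest self-ancestor of $w$ lying in round $r$ (it exists because $w$ is itself a candidate, the self-ancestors of $w$ form a chain, and round is monotone down it). The self-parent of $u_w$, if any, then lies in an earlier round, so $u_w$ is a witness, and $\creator(u_w) = \creator(w)$; and from $u_w \selfanc w \stsee y$, Lemma~\ref{lem:stsee-properties}(2) gives $u_w \stsee y$. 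The set $\{u_w : w \in W\}$ has the same set of creators as $W$, so it is a supermajor set of round-$r$ witnesses strongly seen by $y$.

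In either case we now have a supermajor set $U$ of round-$r$ witnesses with $u \stsee y$ for all $u \in U$. Let $i$ be the maximum of $y$'s two parents' rounds. Each $u \in U$ satisfies $u \sanc y$, hence is an ancestor of one of $y$'s parents, so $r = \round(u) \le i$ by monotonicity; thus $i \ge r$. By the round definition $\round(y) \ge i$. If $i > r$ then already $\round(y) \ge i > r$; if $i = r$ then, because $y$ strongly sees the supermajor set $U$ of round-$i$ events, the round definition forces $\round(y) = i + 1 > r$. Either way $\round(y) > r = \round(x)$, as required.

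The step I expect to cause the most trouble is the case where some $w \in W$ has round strictly above $r$: because a peer can skip rounds when it catches up, its creator may have no round-$r$ witness among $w$'s self-ancestors (or indeed at all), so the self-ancestor construction breaks down and one must instead obtain round-$r$ witnesses through Lemma~\ref{lem:round-elim}. A smaller point to nail down is the precise notion of ``witness'' --- initial, or with self-parent in a strictly earlier round --- and the observation that it is inherited by the oldest round-$r$ self-ancestor even when the creator is dishonest and forks; once the two cases are separated, the rest is routine.
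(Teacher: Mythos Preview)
Your argument is correct, but it does more work than necessary. The paper does not spell out a proof (it remarks that the lemma follows ``more-or-less directly''), so let me indicate the shorter route your own setup already contains.

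The detour through round-$r$ \emph{witnesses}, and in particular the appeal to Lemma~\ref{lem:round-elim} in your first case, is not needed: the round-advance rule only asks for round-$i$ \emph{events}, not witnesses. More to the point, your two case splits can be collapsed into one. With $r=\round(x)$ and $i$ the maximum of $y$'s parents' rounds, you already observe that each $w\in W$ satisfies $r\le\round(w)$ (from $x\anc w$) and, since $w\sanc y$ makes $w$ an ancestor of some parent of $y$, also $\round(w)\le i$. Thus $r\le i$. If $r<i$ you are done immediately, as you note. If $r=i$, then the chain $r\le\round(w)\le i=r$ forces $\round(w)=r$ for \emph{every} $w\in W$; hence $W$ itself is already a supermajor set of round-$i$ events strongly seen by $y$, and the advance rule gives $\round(y)=i+1>r$.

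So your case ``some $w$ has $\round(w)>r$'' can only occur when $i>r$, where it is moot; and in the remaining case $W$ already has the form you want without passing to self-ancestors. Your construction of the $u_w$ and the invocation of Lemma~\ref{lem:round-elim} are both sound, just superfluous. The closing paragraph about the difficulty of the $\round(w)>r$ case evaporates accordingly: that situation never needs to be handled on its own.
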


The main property we wish to establish about rounds is {\em progress,}
which says that every round is inhabited.

\begin{lemma}[Broadcast]
\label{lem:broadcast}
Suppose $x$ is an honest event.  Then there exists an honest event $y$
such that, in its ancestry, $x$ has reached every honest peer.  That
is, for every honest peer $a$, there exists $z$ created by $a$, such
that $x \anc z \anc y$.
\end{lemma}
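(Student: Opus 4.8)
The plan is to leverage Assumption~1 (eventual communication) together with the well-foundedness of the ancestor relation to build the required witness $y$ by iterating over honest peers. First I would fix an enumeration $a_1, \ldots, a_k$ of the honest peers. The strategy is to construct a chain of honest events $y_0, y_1, \ldots, y_k$ such that $x \anc y_0$, each $y_{i-1} \anc y_i$, and for each $j \leq i$ there is some event $z_j$ created by $a_j$ with $x \anc z_j \anc y_i$. Taking $y = y_k$ then gives the result, since $\anc$ is transitive.

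For the base case, I would take $y_0$ to be $x$ itself (or, if $x$'s creator is not among the $a_j$'s—which cannot happen since $x$ is honest—some initial adjustment; but $\creator(x)$ is honest, so $x$ already witnesses the condition for its own creator). The inductive step is the heart of the argument: given $y_{i-1}$, I need to produce an honest event created by (or descended through) peer $a_i$ that is a descendant of $y_{i-1}$, and then an honest event $y_i \anc$-above that which is still reachable from the earlier peers. Here I would appeal to Assumption~1: peer $a_i$ will eventually communicate with the creator of $y_{i-1}$ (both honest), so $a_i$ will eventually create an event $z_i$ with $y_{i-1} \anc z_i$; hence $x \anc z_i$ as well. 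Then, again by Assumption~1, the creators $a_1, \ldots, a_{i-1}$ each eventually hear (directly or transitively) from $a_i$, and in particular there is a single honest event $y_i$ created by some honest peer with $z_i \anc y_i$ and $y_{i-1} \anc y_i$; since ancestry is transitive, all the witnessing events $z_1, \ldots, z_i$ remain ancestors of $y_i$.

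The main obstacle I anticipate is making the ``eventually'' in Assumption~1 rigorous enough to extract an actual event rather than merely an asymptotic guarantee, and in particular ensuring that a \emph{single} later event $y_i$ simultaneously dominates all of $z_1, \ldots, z_i$ and $y_{i-1}$. One clean way to handle this is to first prove an auxiliary fact: for any two honest events $u$ and $v$, there is an honest event $w$ with $u \anc w$ and $v \anc w$ (a ``common descendant'' lemma), obtained by having each creator communicate in turn. With that lemma in hand, the inductive step reduces to two applications—first to get a descendant on peer $a_i$, then to merge it back with $y_{i-1}$—and the bookkeeping on the witnessing events $z_j$ is purely a matter of transitivity. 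The formalization subtleties (what exactly ``eventually communicates'' means as a Coq hypothesis, and why it yields the common-descendant lemma) are presumably where the Coq development does real work, but the mathematical skeleton is the peer-by-peer induction sketched above.
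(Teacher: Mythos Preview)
Your proposal is correct and follows the paper's approach (stated there only as a one-line sketch): induction on the number of honest peers, using the eventual-communication assumption. Note that your auxiliary common-descendant lemma is unnecessary---once you have $y_{i-1} \anc z_i$, transitivity already gives $z_j \anc y_{i-1} \anc z_i$ for every $j < i$, so you may simply take $y_i = z_i$ and the ``merging'' step disappears.
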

\begin{proofsketch}
By induction on the number of honest peers, using the assumption that
every honest peer eventually communicates with every other
honest peer.
\end{proofsketch}

\begin{lemma}[Progress]
\label{lem:progress}
Every round is inhabited.
\end{lemma}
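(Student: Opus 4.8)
The plan is to show that for every natural number $n$, round $n$ contains at least one witness, by induction on $n$. The base case is immediate: the initial events belong to round zero, and (since we assume at least two peers, hence at least one peer, and in fact every peer has an initial event) each such event is the first event created by its peer, so it is a witness in round zero. The real content is the inductive step: assuming round $n$ is inhabited, exhibit an event in round $n+1$.

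For the inductive step, I would first use the induction hypothesis together with the structure of rounds to produce an honest witness in round $n$. Round $n$ is inhabited, so some event $x$ belongs to round $n$; by Lemma~\ref{lem:round-elim} (if $n > 0$) or directly (if $n = 0$) together with the supermajority-honest assumption, there is an honest event of round at least $n$, and walking back along self-parents to the first event of round $n$ created by that peer yields an honest witness $w_0$ in round $n$ --- note walking back stays honest and the peer does have a round-$n$ witness because it has an event of round $n$. Now apply Lemma~\ref{lem:broadcast} to $w_0$: there is an honest event $y$ in whose ancestry $w_0$ has reached every honest peer, i.e.\ for every honest peer $a$ there is an event $z_a$ created by $a$ with $w_0 \anc z_a \anc y$.

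The crux is then to argue that somewhere in the ancestry of $y$ (or in $y$ itself) there must be an event of round $n+1$. Each $z_a$ has a parent's round at least $n$ (since $w_0 \anc z_a$ and $w_0$ has round $n$), so $z_a$ has round at least $n$. If any $z_a$ --- or any event between $z_a$ and $y$ --- already has round $> n$ we are done. Otherwise every honest peer has a witness of round exactly $n$ among the ancestors of $y$; I would need to show that these witnesses are strongly seen by some common later event. This is where the honesty and communication assumptions do the heavy lifting: because the $z_a$ are spread across a supermajority (indeed all) of honest peers and all lie below $y$, a further round of broadcast (another appeal to Lemma~\ref{lem:broadcast}, or tracking that $y$'s descendants eventually accumulate intermediaries on a supermajority) produces an event $y'$ that strongly sees a round-$n$ witness on a supermajority of peers; by the round-advancement rule, $y'$ (or its relevant descendant) then belongs to round $n+1$.

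The main obstacle I anticipate is precisely this last step --- converting ``$w_0$ has reached every honest peer below $y$'' into ``some event strongly sees round-$n$ witnesses on a supermajority of peers.'' Strongly seeing requires intermediaries that \emph{see} (not merely descend from) the witnesses, and it requires them to be collected below a single event; getting from the reachability statement of Lemma~\ref{lem:broadcast} to a single event with a supermajor set of seeing-intermediaries may require applying the broadcast lemma twice (once to spread $w_0$ out, once to gather the witnesses back in), and I would need to be careful that the witnesses involved are the honest ones so that ``sees'' coincides with ``ancestor'' for them. I expect the bookkeeping about which peer's witness is which, and the fact that a peer with a round-$n$ event necessarily has a round-$n$ witness, to be routine once set up, with Lemma~\ref{lem:later-round} available as a sanity check that nothing has gone wrong with the round numbering.
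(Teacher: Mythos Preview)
Your proposal is essentially the paper's proof: two applications of Broadcast (once to spread the round-$n$ event out to every honest peer, once to gather those honest intermediaries back under a single event), with honesty used exactly where you anticipate---so that ``sees'' reduces to ``ancestor''---and then an appeal to Lemma~\ref{lem:later-round} to conclude the resulting event lies in a strictly later round. The one organizational point you are missing is that the paper strengthens the induction hypothesis to ``some round $j \geq i$ is inhabited'' and then winds back to round $i$ via Lemma~\ref{lem:round-elim}; your direct induction, as written, produces only an event of round ${\geq}\, n+1$, so you need this same wind-back step (which you allude to but do not make explicit) to close the induction at exactly $n+1$.
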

\begin{proof}
The proof is by induction on $i$.  The case $i=0$ is immediate, so
suppose $i > 0$.  By induction there exists an inhabitant $x$ of round
$i-1$.  Using Lemma~\ref{lem:broadcast}, there exists $y$ such that
$x$ has reached every honest peer in $y$'s ancestry.  And again, there
exists $z$ such that $y$ has reached every honest peer in $z$'s
ancestry.

Suppose $a$ is an honest peer.  We show that there exists $v$
created by $a$ that $x \anc v \stsee z$.  Since the honest peers
constitute a supermajority, it follows by Lemma~\ref{lem:later-round}
that $z$'s round is later than $i-1$, and is thus at least $i$.
If $z$'s round is later than $i$, it is easy to show that it has an
ancestor in round $i$.

By the specification of $y$, there exists $v$ created by $a$ such that $x
\anc v \anc y$.  It remains to show $v \stsee z$.  For every honest
peer $b$, by the specification of $z$ there exists $w$ created by $b$
such that $y \anc w \anc z$.  Since $a$ is honest, $v \see w \anc z$.  Since
the honest peers constitute a supermajority, $v \stsee z$.
\end{proof}

\subsection{Voting}
\label{sec:voting}

One cannot use the raw witnesses to determine the round an event was
received by the network, because one can never be sure that one has
observed all the witnesses.  Instead, we will use the notion of a {\em
  famous\/} witness---eventually everyone will know all the famous
witnesses in any given round.  The main function of the consensus
algorithm is to determine which witnesses are famous.

Fame is determined by an election.  Votes are cast by witness events,
not by peers.  This is important because one peer has no way of
knowing what another {\em peer\/} knows.  In contrast, if a peer is
aware of an {\em event\/} at all, it knows that event's entire ancestry.
Thus, if voting is deterministic, each witness's vote can be
determined by everyone who is aware of the witness, without any
additional communication.  However, votes will be ``collected'' only from
strongly-seen witnesses, so there can be at most one meaningful voter per
peer per round.

We will refer to the witness whose fame is being determined as the {\em
  candidate,} and each witness that is casting votes on the
candidate's fame as the {\em voter.}  The election begins $d$ rounds
after the candidate's round, where $d$ is a parameter that is at least
1.  Thus if the candidate belongs to round $i$, the voters in the
first round of the election belong to round $i+d$.

In the first round of an election, each voter will vote yes if the
candidate is among its ancestors.  (In essence, you think someone is
famous if you have heard of them.)  In successive rounds, each voter
votes the way it observed the majority vote in the previous round.

We say that a round of an election is {\em nearly unanimous\/} if
voters on a supermajority of peers vote the same way.  (Note that this
is a stronger condition than merely a supermajority of the voters,
since some peers might not be voting.)
If a voter ever observes a nearly unanimous result in one
round, then the vote in the next round will be unanimous, since any
two supermajorities must have a majority of their elements in common.  Clearly,
once the election becomes
unanimous, it will stay so.  Thus we can end the election as soon as
any event observes a nearly unanimous result in the previous round.

\paragraph{Coins}

Under normal circumstances, this process will come to consensus
quickly, but an adversary with sufficient control of the network can
prevent it.  Deterministically, the problem is
insurmountable~\cite{fischer+:flp}, but, as usual, it can be solved
with randomization.

Every $c$ rounds (a parameter at least $d + 3$), the election will
employ a coin round: Any voter who sees a nearly unanimous result the
previous round will continue to vote with the majority.  However, the
remaining voters will determine their votes using a coin flip.  All
the voters who saw a nearly unanimous result will certainly vote the
same way, and eventually---by chance---the honest coin flippers will also
vote that way.  Two rounds later, every voter will see a unanimous
result and the election will end.\footnote{This is an important
  theoretical property, but as a practical matter, in the unlikely
  event an adversary has enough control over the network to force coin
  rounds, it will be able to grind the algorithm to a halt for rounds
  exponential in the number of peers.}

\begin{definition}[Voting]
Suppose $x$ is a round $i$ witness and $y$ is a round $j$ witness,
with $i + d \leq j$.  Then $\vote(x, y)$ (that is, $y$'s vote on $x$'s
fame) and $\election(x, y, t, f)$ (that is, the votes on $x$'s fame
observed by $y$ are $t$ yeas and $f$ nays) are defined simultaneously as
follows:
\begin{itemize}
\item
If $i + d = j$ then $\vote(x,y)$ is $\yes$ if $x \anc y$, and $\no$
otherwise.

\item
If $i + d \lt j$ and $(j - i) \modulo c \neq 0$ and $\election(x, y, t, f)$, then $\vote(x,y)$ is
$\yes$ if $t \geq f$ and $\no$ otherwise.

\item
If $i + d \lt j$ and $(j - i) \modulo c = 0$ and $\election(x, y, t, f)$,
then
\[
\vote(x, y) =
\left\{
\begin{array}{ll@{}l}
\yes & t & {} \gt \twothirds \cdot N
\\
\no & f & {} \gt \twothirds \cdot N
\\
\coin(y) & \multicolumn{2}{l}{\mbox{otherwise}}
\end{array}
\right.
\]

\item
If $i + d \lt j$, then $\election(x, y, t, f)$ holds if and only if:
\[
\begin{array}{lll}
W & = & \{ \mbox{round $j-1$ witnesses $w$ such that $w \stsee y$} \}
\\
t & = & |\{ w \in W \sbar \vote(x, w) = \yes \}|
\\
f & = & |\{ w \in W \sbar \vote(x, w) = \no \}|
\end{array}
\]
\end{itemize}
\end{definition}

In the above definition, $\coin(y)$ is a pseudo-random coin flip
computed by drawing a bit from the middle of a cryptographic hash of
$y$.  It is important that the coin flip is pseudo-random, not truly
random, so that other peers can reproduce it.

\begin{definition}[Decision]
Suppose $x$ is a round $i$ witness and $y$ is a round $j$ witness,
where $i + d \lt j$ and $j$ is not a coin round (that is, $j - i \modulo c
\neq 0$).  Suppose further that $\election(x, y, t, f)$.  Then $y$
decides $\beta$ on $x$'s fame (written $\decide(x, y, \beta)$), if $t
\gt \twothirds \cdot N$ and $\beta = \yes$, or if $f \gt \twothirds
\cdot N$ and $\beta = \no$.
\end{definition}

The outcome of the election is determined as soon as any peer decides,
but other peers might not realize it right away.

\subsection{Consensus}

For the algorithm to work, we require four properties:

\begin{enumerate}
\item
Decisions are pervasive: once one peer decides someone's fame, that
decision propagates to every other peer.
(Corollary~\ref{cor:propagation}.)

\item
Every round will have at least one famous witness.
(Theorem~\ref{thm:famous-exists}.)

\item
Late arrivals are not famous: if a witness is not well disseminated
within $d+1$ rounds (the earliest a decision can be made), it will not
be famous.
(Corollary~\ref{cor:no-late-fame}.)

\item
Termination: eventually every witness will have its fame decided.
(Theorem~\ref{thm:termination}.)
\end{enumerate}

\begin{lemma}[Decision-Vote Consistency]
\label{lem:decision-vote-consistency}
Suppose $w$, $x$, and $y$ are witnesses, where $x$ and $y$ belong to
the same round.  If $\decide(w, x, \beta)$ and $\vote(w, y) = \beta'$
then $\beta = \beta'$.
\end{lemma}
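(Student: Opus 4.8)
The plan is to reduce everything to a counting argument over the round $j-1$ witnesses. Write $i$ for $w$'s round and $j$ for the common round of $x$ and $y$. Since $\decide(w,x,\beta)$ holds, we automatically have $i+d < j$ and $j$ is not a coin round, so $\election(w,y,t_y,f_y)$ falls in the election branch and $\vote(w,y)$ is $\yes$ exactly when $t_y \geq f_y$ — the very same non-coin rule used to define $\decide$. By symmetry it suffices to treat $\beta = \yes$: then $\decide(w,x,\yes)$ unpacks to $\election(w,x,t_x,f_x)$ with $t_x > \twothirds\cdot N$, and the goal becomes showing $t_y \geq f_y$.

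First I would set up the vote sets. Let $W_x$ (resp. $W_y$) be the set of round $j-1$ witnesses strongly seen by $x$ (resp. $y$), and let $Y$ be the set of round $j-1$ witnesses $w'$ with $\vote(w,w') = \yes$; the key point is that $\vote(w,w')$ depends only on $w$ and $w'$, not on $x$ or $y$, so $t_x = |W_x \cap Y|$, $t_y = |W_y \cap Y|$, and $f_y = |W_y| - |W_y \cap Y|$. Next I would establish, using Lemma~\ref{lem:strongly-seeing} together with monotonicity of rounds along a creator's chain, that neither $W_x$ nor $W_y$ contains two distinct witnesses with the same creator, and moreover that if a creator $a$ contributes a witness to $W_x$ and a witness to $W_y$, then those two witnesses coincide: two round $j-1$ witnesses with the same creator that do not form a fork must be equal, since otherwise the self-earlier one, already in round $j-1$, would contradict the self-later one being a witness. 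Thus $W_x$ and $W_y$ induce partial functions from peers to round $j-1$ witnesses that agree wherever both are defined.

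With that in hand the rest is pure counting. Let $A$ be the set of peers whose $W_x$-witness lies in $Y$; then $|A| = t_x > \twothirds\cdot N$. Let $B$ be the set of peers having a $W_y$-witness; then $|B| = t_y + f_y$, and by Lemma~\ref{lem:round-elim} $W_y$ contains a supermajor set of round $j-1$ witnesses, so $|B| > \twothirds\cdot N$. For every peer in $A \cap B$ its (common) $W_x$/$W_y$-witness votes $\yes$ and lies in $W_y$, whence $t_y \geq |A \cap B| \geq |A| + |B| - N > \twothirds\cdot N + |B| - N$. From this $f_y = |B| - t_y < \frac13 N$ while $t_y > |B| - \frac13 N > \frac13 N$, so $t_y > f_y$ and $\vote(w,y) = \yes = \beta$; the $\beta = \no$ case is identical with the roles of $\yes$ and $\no$ swapped.

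The main obstacle is the witness-identification step — ruling out that a dishonest creator supplies \emph{different} round $j-1$ witnesses to $W_x$ and to $W_y$, which would destroy the overlap argument. This is exactly the content Lemma~\ref{lem:strongly-seeing} supplies: the two supermajorities of strongly-seen round $j-1$ witnesses cannot straddle a fork, and combining that cleanly with the round-monotonicity observation is the one place demanding care. The residual inclusion–exclusion arithmetic is routine, and pleasantly it delivers the strict inequality $t_y > f_y$ with room to spare, so no edge case around $t_y = f_y$ arises.
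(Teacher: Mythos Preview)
Your argument is correct and follows essentially the same route as the paper: both reduce to a supermajority-overlap count on the round $j-1$ electors of $x$ versus those of $y$, using Lemma~\ref{lem:round-elim} for the latter and the decision threshold for the former. Your explicit witness-identification step (via Lemma~\ref{lem:strongly-seeing} plus round monotonicity) makes precise what the paper compresses into the phrase ``any two supermajorities of the same set have a majority in common,'' but the underlying idea is identical.
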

\begin{proof}
Let $j$ be the round of $x$ and $y$.
Observe that $\vote(w, y)$ is given by the second case of the
definition of voting.  Unpacking the definitions, let $\election(w, x,
t, f)$ and $\election(w, y, t', f')$.  Let $V_z = 
\{ \mbox{round $j-1$ witnesses $v$ such that $v \stsee z$} \}$.
Recall that at most one witness per peer per round can be strongly seen.

Suppose $\beta = \yes$.  (The other case is similar.)  Then $|T| >
\twothirds \cdot N$ where $T = \{ v \in V_x \sbar \vote(w, v) = \yes
\}$.  Thus $T$ is a supermajor set.  By
Lemma~\ref{lem:round-elim}, $V_y$ is also a supermajor set.  Observe
that $T$ and $V_y$ are both sets of round $j-1$ witnesses.  Any two
supermajorities of the same set have a majority in common.  Thus a
majority of $V_y$ is in common with $T$, but $T$ all vote yes.  Hence $t' >
f'$ so $\beta' = \yes$.
\end{proof}

Note that the proof illustrates why decisions cannot be made in coin
rounds.  In a coin round, $y$ would have to see a supermajority to
determine its vote, not merely a majority, and we cannot guarantee a
supermajority.  Thus the lemma would fail to hold.

\begin{corollary}[Propagation]
\label{cor:propagation}
Suppose $x$ makes a decision on $w$.  Then in every round after
$x$'s, except coin rounds, every witness decides the same way as $x$ did.
\end{corollary}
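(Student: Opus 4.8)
The plan is to prove, by induction on the round, a statement slightly stronger than the corollary: once $x$ decides $\beta$ on $w$, \emph{every} witness in every round from $x$'s onward \emph{votes} $\beta$ on $w$, and this vote is never overturned by a coin round; the decisions in non-coin rounds then follow immediately. So let $w$ be a round $i$ witness and $x$ a round $j$ witness with $\decide(w,x,\beta)$. By the definition of $\decide$ this means $i+d < j$, that $j$ is not a coin round, that $\election(w,x,t,f)$ holds, and that either $t > \twothirds\cdot N$ with $\beta=\yes$ or $f > \twothirds\cdot N$ with $\beta=\no$; I will treat the case $\beta=\yes$, the other being symmetric. To seed the induction at round $j$, I would apply Lemma~\ref{lem:decision-vote-consistency}: for every round $j$ witness $y$ we must have $\vote(w,y)=\beta$, since any other value would contradict decision-vote consistency. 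Thus the invariant ``every round $k$ witness votes $\beta$ on $w$'' holds at $k=j$.

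For the inductive step, assume the invariant holds at some round $k\ge j$ and let $y$ be a round $k{+}1$ witness, with $\election(w,y,t',f')$, so that $t'$ and $f'$ tally the yes/no votes among $W=\{\text{round }k\text{ witnesses }v\text{ such that }v\stsee y\}$. Since $k < k{+}1$ and $y$ lies in round $k{+}1$, Lemma~\ref{lem:round-elim} supplies a supermajor set of round $k$ witnesses all strongly seen by $y$; that set is contained in $W$, so $W$ is itself supermajor. By the induction hypothesis every element of $W$ votes $\yes$ on $w$, so $t'=|W| > \twothirds\cdot N$ and $f'=0$. Because $i+d < j \le k < k{+}1$, the value $\vote(w,y)$ is computed by the majority clause or the coin clause of the definition of voting, not the base case: if $k{+}1$ is not a coin round, the condition $t'\ge f'$ forces $\vote(w,y)=\yes$; if $k{+}1$ is a coin round, the condition $t' > \twothirds\cdot N$ forces $\vote(w,y)=\yes$ and the coin is never consulted. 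Either way $\vote(w,y)=\beta$, which completes the induction.

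It then remains to read off the decisions. Fix a round $j' > j$ that is not a coin round and a round $j'$ witness $y$ with $\election(w,y,t',f')$. Since round numbers are integers, $j'-1 \ge j$, so by the invariant every round $j'{-}1$ witness votes $\beta$ on $w$; exactly as in the inductive step, Lemma~\ref{lem:round-elim} makes the voter set $W$ supermajor, every member votes $\yes$, and hence $t' > \twothirds\cdot N$. As $j'$ is not a coin round, this is precisely the condition for $\decide(w,y,\yes)$, i.e.\ $\decide(w,y,\beta)$, as desired.

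The one real subtlety — and the reason for proving the strengthened invariant rather than arguing directly about decisions — is that a decision at round $j$ only pins down the votes \emph{at round $j$} (via Lemma~\ref{lem:decision-vote-consistency}), while between $j$ and the target round $j'$ there may be coin rounds, where decisions do not even exist. One therefore has to carry an invariant on votes and show it survives a coin round, and that works only because Lemma~\ref{lem:round-elim} upgrades the relevant voter set from a majority to a full \emph{supermajority}, so that $t' > \twothirds\cdot N$ and the coin clause is bypassed; a bare majority, which is all that decision-vote consistency by itself transmits, would not clear a coin round.
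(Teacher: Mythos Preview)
Your proof is correct and follows the same approach as the paper's proof sketch: seed unanimity at round $j$ via Lemma~\ref{lem:decision-vote-consistency}, then carry the invariant ``every witness votes $\beta$'' forward by induction on rounds (the paper's ``it is easy to see that unanimity persists''), and read off the decisions in non-coin rounds. You have simply filled in the details the paper elides, including the explicit use of Lemma~\ref{lem:round-elim} to guarantee a supermajor voter set and the observation that this supermajority bypasses the coin in coin rounds.
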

\begin{proofsketch}
Let $j$ be the round of $x$.  By
Lemma~\ref{lem:decision-vote-consistency}, every witness in round $j$
votes $\beta$.  It is easy to see that unanimity persists.  Thus every
witness in any round after $j$ decides $\beta$, unless it is
prohibited from doing so because it is a coin round.
\end{proofsketch}

\noindent
An easy corollary is that all decisions agree:

\begin{corollary}[Consistency]
\label{cor:consistency}
Suppose $\decide(w, x, \beta)$ and $\decide(w, y, \beta')$.  Then
$\beta = \beta'$.
\end{corollary}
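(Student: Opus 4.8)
The plan is to reduce the statement to the two results just proved: Decision-Vote Consistency (Lemma~\ref{lem:decision-vote-consistency}) and Propagation (Corollary~\ref{cor:propagation}). First I would record a small auxiliary fact: whenever a witness $z$ decides $\beta$ on $w$, it also votes $\beta$ on $w$, and its decision is unique. Indeed, a decision is made only in a non-coin round, where $\vote(w,z)$ is $\yes$ exactly when $t \geq f$ for $\election(w,z,t,f)$; if $z$ decides $\yes$ then $t > \twothirds \cdot N$, so from $t + f \leq N$ we get $f < \nicefrac{1}{3} \cdot N < t$ and hence the vote is $\yes$, and symmetrically for $\no$. The same arithmetic rules out deciding both ways, so a witness's decision is unambiguous.

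Now let $j_x$ and $j_y$ be the rounds of $x$ and $y$; by symmetry assume $j_x \leq j_y$. In the case $j_x = j_y$, the events $x$ and $y$ belong to the same round, and since $\decide(w, y, \beta')$ implies $\vote(w, y) = \beta'$, Lemma~\ref{lem:decision-vote-consistency} applied to $w$, $x$, $y$ yields $\beta = \beta'$ immediately.

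In the case $j_x < j_y$, the witness $y$ lies in a round strictly after $x$'s. Because $y$ makes a decision, $j_y$ is not a coin round, so Corollary~\ref{cor:propagation} applies and tells us that $y$ decides the same way as $x$, i.e.\ $y$ decides $\beta$. Since $y$'s decision is unique and $y$ decides $\beta'$ by hypothesis, $\beta = \beta'$.

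The argument is short, and I do not expect any real obstacle; the one point that needs care is the $j_x = j_y$ case, which Propagation cannot handle (that corollary speaks only of rounds strictly after the deciding witness's) and which must instead go through Decision-Vote Consistency, together with the elementary observation that a decision entails the matching vote and is unique. Ensuring that this ``decision implies vote'' fact is actually on hand in the formalization is the main, if modest, bit of bookkeeping.
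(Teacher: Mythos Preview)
Your argument is correct, but the paper takes a slightly different and more uniform route. Rather than splitting on whether $j_x = j_y$ or $j_x < j_y$, the paper simply invokes Progress (Lemma~\ref{lem:progress}) to find a witness $z$ in some non-coin round strictly later than both $x$ and $y$, and then applies Propagation (Corollary~\ref{cor:propagation}) twice---once from $x$ and once from $y$---to conclude that $z$ decides both $\beta$ and $\beta'$; uniqueness of $z$'s decision finishes it. This avoids the case split and the explicit appeal to Decision-Vote Consistency, at the price of needing Progress to manufacture $z$. Your version, by contrast, does not need Progress at all: in the equal-round case you go straight through Lemma~\ref{lem:decision-vote-consistency} using your ``decision implies vote'' observation, and in the strict case you apply Propagation directly to $y$. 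Both proofs ultimately rest on the same auxiliary fact that a witness cannot decide two different values, which you spell out and the paper leaves implicit in its final ``Hence $\beta = \beta'$.''
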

\begin{proof}
Let $z$ be a witness in a later round than $x$ and $y$ that is not a
coin round.  (Such a witness exists by Lemma~\ref{lem:progress}.)
By Corollary~\ref{cor:propagation}, $\decide(w, z, \beta)$ and
$\decide(w, z, \beta')$.  Hence $\beta = \beta'$.
\end{proof}

\begin{theorem}[Existence]
\label{thm:famous-exists}
Every round has at least one famous witness.
\end{theorem}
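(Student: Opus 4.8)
The plan is to fix an arbitrary round $i$ and exhibit a round~$i$ witness whose fame is decided $\yes$. The natural candidate is a witness that is sufficiently well disseminated; concretely, I would start from any round~$i$ witness $x$ (one exists by Lemma~\ref{lem:progress}), and use the Broadcast lemma (Lemma~\ref{lem:broadcast}) repeatedly to push $x$ out to every honest peer, exactly as in the proof of Progress. If I broadcast enough times, I can arrange that $x$ is an ancestor of a supermajority of the round~$(i+d)$ witnesses (here I would reuse the $v \stsee z$ construction from the Progress proof, where each honest peer's relay event $v$ with $x \anc v$ is strongly seen by the later event). Then every such round~$(i+d)$ witness votes $\yes$ on $x$ in the first round of the election, since $x$ is among its ancestors.

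Next I would show the $\yes$ vote is "nearly unanimous" and therefore propagates. Having arranged that voters on a supermajority of peers vote $\yes$ in round $i+d$, the definition of voting gives that any round~$(i+d+1)$ witness $y$ that strongly sees a supermajority of round~$(i+d)$ witnesses will see $t \geq f$ with $t$ itself a supermajority — in fact, by an argument like the one in Lemma~\ref{lem:decision-vote-consistency}, two supermajorities of the set of round~$(i+d)$ witnesses overlap in a majority, so $y$ sees $t > f$ and votes $\yes$. Iterating (and again invoking Lemma~\ref{lem:round-elim} to guarantee the needed supermajorities of witnesses exist in every later round), the election stays unanimously $\yes$. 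In particular, at any non-coin round $j > i+d$ — one exists by Lemma~\ref{lem:progress}, and we can pick $j$ with $(j-i) \bmod c \neq 0$ — every round~$j$ witness $z$ sees $\election(x, z, t, f)$ with $t > \twothirds \cdot N$ and $f = 0$, so $\decide(x, z, \yes)$ holds. Hence $x$ is famous.

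The main obstacle I anticipate is the bookkeeping around \emph{which} rounds the relevant events land in, and making the "broadcast enough times" step precise: broadcasting $x$ to all honest peers via Lemma~\ref{lem:broadcast} produces events in some round that could be much larger than $i+d$, so I would need to be careful that the supermajority of "$\yes$-voting" events are genuinely \emph{round}~$(i+d)$ witnesses rather than just events in round~$\geq i+d$, or else restate the first-round-of-election argument at whatever round the broadcast actually reaches. The cleanest fix is probably to phrase the key claim as: there is a round~$k \geq i+d$ and a supermajor set of round~$k$ witnesses each of which votes $\yes$ on $x$ (using that each has $x$ as an ancestor, via Broadcast, in the base case $k=i+d$ together with possibly padding by an extra broadcast round), and then run the propagation argument forward from round $k$. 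A secondary subtlety is ensuring the chosen later non-coin round $j$ also satisfies $i+d < j$ and $j$ is past the point where unanimity has set in; this is handled by taking $j$ large enough, which Lemma~\ref{lem:progress} permits.
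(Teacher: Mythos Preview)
Your approach has a genuine gap: you fix an \emph{arbitrary} round~$i$ witness $x$ and then try to show that this particular $x$ is famous, but the theorem does not claim every witness is famous, and in fact it is not. The Broadcast lemma guarantees that an honest $x$ eventually reaches every honest peer, but it gives no control over \emph{when}: the broadcast may complete only in some round far past $i+d$. If $x$ is not already an ancestor of enough round~$(i+d)$ witnesses, those first-round voters cast $\no$, and a $\no$ supermajority propagates forward by exactly the same mechanism you invoke for $\yes$; indeed Corollary~\ref{cor:no-late-fame} shows such an $x$ is decided $\no$. Your proposed fix---find some $k \geq i+d$ with a supermajor set of round~$k$ witnesses each having $x$ as an ancestor---does not repair this, because ancestry determines the vote only in the \emph{first} voting round $i+d$; for $k > i+d$ a round~$k$ witness votes by the majority it observed in round $k-1$, irrespective of whether $x$ is among its ancestors (Lemma~\ref{lem:yes-vote-impl-anc} goes only one way).

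The missing idea is that the famous witness must be \emph{discovered}, not chosen in advance. The paper works \emph{backward} from an arbitrary round~$(i+d+2)$ witness $z$, forming the sets $S_2$, $S_1$, $S_0$ of round $i+d+1$, $i+d$, $i$ witnesses that feed (via strongly-seeing) into $z$. It then counts edges in the bipartite ancestry graph between $S_0$ and $S_1$: each member of $S_1$ has more than $\twothirds\cdot N$ ancestors in $S_0$ (Lemma~\ref{lem:round-elim}), so there are more than $\twothirds\cdot N\cdot|S_1|$ edges, and since $|S_0|\leq N$ the pigeonhole principle yields some $x\in S_0$ that is an ancestor of more than two-thirds of $S_1$. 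That $x$ is the famous witness. The counting argument is what guarantees dissemination by round $i+d$; Broadcast cannot substitute for it.
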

\begin{proof}
If $S$ is a set of events,
define $\backward(j, S) = \{ w \sbar \mbox{$w$ is a round $j$ witness
  and\ } \exists v \in S .\, w \stsee v \}$.

Suppose $S$ is an inhabited set of round $k$ witnesses and $j < k$.
By Lemma~\ref{lem:round-elim}, $\twothirds \cdot N < |\backward(j, S)| \leq N$.

Let $i$ be arbitrary, and let $z$ be an arbitrary witness in round
$i+d+2$, then let:
\[
\begin{array}{lll}
S_3 &=& \{ z \} \\
S_2 &=& \backward(i+d+1, S_3) \\
S_1 &=& \backward(i+d, S_2) \\
S_0 &=& \backward(i, S_1)
\end{array}
\]
Now consider a bipartite graph between $S_0$ and $S_1$, where there is
an edge between $x \in S_0$ and $y \in S_1$ if $x \anc y$.  Since
strongly-seeing implies ancestor, Lemma~\ref{lem:round-elim} tells us
that every event in $S_1$ is the terminus of more than $\twothirds
\cdot N$ edges.  Thus there are over $\twothirds \cdot N \cdot |S_1|$
edges total.  Since there are at most $N$ events in $S_0$, by the
pigeonhole principle there is at least one event in $S_0$ that is the
origin of over $\nicefrac{2}{3} \cdot |S_1|$ edges.  Let $x$ be such
an event.  We will show that $x$ is famous.

By the specification, $x$ is an ancestor of over two-thirds of the
events in $S_1$.  Thus, over two-thirds of $S_1$ will vote yes.  We
claim that every event in $S_2$ will vote yes.  It follows that $z$
will decide yes.  (We know $i+d+2$ will not be a coin round since $c
\geq d+3$ by assumption.)

Suppose $y \in S_2$.  The events sending votes to $y$ are exactly
$\backward(i+d, \{ y \})$ and, as above, $\twothirds \cdot N <
|\backward(i+d, \{ y \})|$.  Thus, the events in $S_1$ voting yes and
the events in $S_1$ sending votes to $y$ are both supermajor,
so they have a majority in common.  Thus $y$ will see more yeas than
nays, and will vote yes itself.
\end{proof}

\begin{lemma}
\label{lem:yes-vote-impl-anc}
If $\vote(x, y) = \yes$ then $x \anc y$.
\end{lemma}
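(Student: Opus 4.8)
If $\vote(x, y) = \yes$ then $x \anc y$.

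The plan is to proceed by strong induction on the round $j$ of $y$ (equivalently, on $j - i$ where $i$ is the round of $x$), following the case structure of the definition of voting. First I would handle the base case, where $i + d = j$: here $\vote(x, y) = \yes$ holds precisely when $x \anc y$, so the conclusion is immediate from the first clause of the definition.

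For the inductive step, suppose $i + d < j$ and $\vote(x, y) = \yes$. Unpacking the voting definition, in either the non-coin case or the coin case, a $\yes$ vote requires $\election(x, y, t, f)$ with $t > 0$ — either $t \geq f$ with $t + f > \twothirds \cdot N > 0$ (so $t>0$), or $t > \twothirds \cdot N > 0$, or the coin flip case together with $t>0$\,; in every subcase where the outcome is $\yes$, we must have $t \geq 1$. Actually the cleanest observation is: if $\vote(x,y)=\yes$ and $i+d<j$, then in particular there is at least one round $j-1$ witness $w$ with $w \stsee y$ and $\vote(x, w) = \yes$ — this is forced because $t = |\{w \in W : \vote(x,w)=\yes\}| > 0$ in each of the three branches that can yield $\yes$ (the coin branch needs checking, but if $t = f = 0$ then neither $t > \twothirds N$ nor $f > \twothirds N$ and the coin would be tossed, yet with an empty witness set $W$ the round $j-1$ would be uninhabited with respect to $y$, contradicting Lemma~\ref{lem:round-elim} since $j-1 \geq i+d > i \geq 0$ means $W$ is supermajor and hence nonempty; so in fact $t+f > \twothirds N$ always, and a $\yes$ outcome forces $t \geq 1$). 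Pick such a $w$. Since $w$ is a round $j-1$ witness with $(j-1) \geq i + d$, the induction hypothesis applies to $\vote(x, w) = \yes$, yielding $x \anc w$. Also $w \stsee y$, so by Lemma~\ref{lem:stsee-properties}(1), $w \sanc y$, hence $w \anc y$. By transitivity of the ancestor relation, $x \anc y$, as required.

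The main obstacle I anticipate is the coin-round subcase: one must rule out the degenerate possibility that $\election(x,y,0,0)$ with the coin coming up $\yes$, since then there would be no $\yes$-voting witness to feed the induction hypothesis. The resolution is exactly the appeal to Lemma~\ref{lem:round-elim} noted above: because $j - 1 \geq i + d \geq 1 > 0$ and $y$ belongs to round $j$, the set $W$ of round $j-1$ witnesses strongly seen by $y$ is supermajor, in particular nonempty, so $t + f > \twothirds \cdot N$; a $\yes$ verdict in a coin round requires either $t > \twothirds\cdot N$ (impossible with $t=0$) or the coin branch, but the coin branch is only reached when \emph{neither} side exceeds $\twothirds\cdot N$, which with $t+f>\twothirds\cdot N$ still permits $t=0$ — in which case the verdict $\coin(y)$ could be $\yes$ with $t = 0$. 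So the honest argument is slightly more delicate: when $t = 0$ the conclusion $x \anc y$ need not follow from the witnesses at all. I expect the paper resolves this by a separate structural observation — perhaps that a candidate $x$ with no $\yes$ support at round $i+d$ (equivalently $x$ not seen broadly enough) can be shown never to acquire $\yes$ support, so that $t = 0$ propagates and the coin, though it may momentarily read $\yes$, is overridden, or else by strengthening the statement so that the $t=0$ case is handled by tracking that $x \anc y$ fails uniformly. I would, accordingly, consider strengthening the induction hypothesis to the biconditional-flavored claim "$\vote(x,y) = \yes$ implies $x$ is an ancestor of \emph{some} strongly-seen voter all the way down to a round-$(i+d)$ witness that has $x$ as ancestor," which makes the chain of ancestry explicit and sidesteps the coin-value concern by never relying on the coin to \emph{create} ancestry.
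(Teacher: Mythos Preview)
Your overall approach---induction on the voter, then case analysis on the voting definition, extracting a yes-voting predecessor $w$ and applying the induction hypothesis---is exactly what the paper does (it phrases the induction as well-founded induction on $\sanc$, but that is immaterial here).  Where you go astray is in your analysis of the coin-round subcase: the gap you worry about is not real.

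You correctly observe, via Lemma~\ref{lem:round-elim}, that the set $W$ of round $j-1$ witnesses strongly seen by $y$ is supermajor, so $t+f>\twothirds\cdot N$.  You then write that the coin branch ``with $t+f>\twothirds\cdot N$ still permits $t=0$.''  It does not.  The coin branch is entered only when \emph{both} $t\le\twothirds\cdot N$ and $f\le\twothirds\cdot N$.  If $t=0$, then $f=t+f>\twothirds\cdot N$, which violates the second of these two conditions; the definition would then take the $\no$ branch, not the coin branch.  Hence in the coin branch we always have $t>0$, and your argument goes through unchanged: there is a round $j-1$ witness $w$ with $w\stsee y$ and $\vote(x,w)=\yes$, the induction hypothesis gives $x\anc w$, and $w\stsee y$ gives $w\anc y$.

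So no strengthening of the induction hypothesis is needed, and the paragraph proposing one can be dropped.  With the coin-case arithmetic corrected, your proof is complete and matches the paper's sketch (``a voter cannot vote yes without receiving some yes votes\ldots and one only receives votes from ancestors'').
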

\begin{proofsketch}
By well-founded induction on $y$ using the strict-ancestor order
($\sanc$).  In the first voting round, a voter will vote yes only if
it is a descendant of $x$.  Thereafter, a voter cannot vote yes
without receiving some yes votes (half in regular rounds and a third
in coin rounds), and one only receives votes from ancestors.
\end{proofsketch}

\begin{corollary}[No Late Fame]
\label{cor:no-late-fame}
Suppose $x$ is a round $i$ witness and $y$ is a round $j$ witness,
where $i + d < j$ and $(j - i) \modulo c \neq 0$.  If $x \not\anc y$ then
$\decide(x, y, \no)$.
\end{corollary}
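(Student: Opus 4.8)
The plan is to show that, under the stated hypotheses, \emph{every} round $j-1$ witness that $y$ strongly sees must vote $\no$ on $x$'s fame, and that there are more than $\twothirds \cdot N$ such witnesses; the decision $\decide(x, y, \no)$ then falls out of the definition of deciding. First I would note that $i + d < j$ gives $i + d \leq j - 1$, so $\vote(x, w)$ is defined for every round $j-1$ witness $w$, and the last clause of the voting definition applies to $\election(x, y, t, f)$, with
\[
W = \{\, w \sbar \mbox{$w$ is a round $j-1$ witness and $w \stsee y$} \,\},
\quad t = |\{ w \in W \sbar \vote(x,w) = \yes \}|,
\quad f = |\{ w \in W \sbar \vote(x,w) = \no \}|.
\]

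Next I would argue that no $w \in W$ votes $\yes$. Fix $w \in W$. By Lemma~\ref{lem:stsee-properties}(1), $w \stsee y$ yields $w \sanc y$, hence $w \anc y$. If $\vote(x, w) = \yes$, then Lemma~\ref{lem:yes-vote-impl-anc} gives $x \anc w$, and transitivity of ancestor yields $x \anc y$, contradicting the hypothesis $x \not\anc y$. Since a vote is either $\yes$ or $\no$, we conclude $\vote(x, w) = \no$. Hence $t = 0$ and $f = |W|$.

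Finally I would invoke Lemma~\ref{lem:round-elim} on the round pair $j - 1 < j$ with the event $y$, obtaining a supermajor set $W'$ of round $j-1$ witnesses with $w \stsee y$ for all $w \in W'$. Then $W' \subseteq W$, so $f = |W| \geq |W'| > \twothirds \cdot N$. Since $(j - i) \modulo c \neq 0$, $j$ is not a coin round, so the hypotheses of the Decision definition are satisfied and $\decide(x, y, \no)$ holds.

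I do not expect a genuine obstacle here: the corollary is essentially a repackaging of Lemma~\ref{lem:yes-vote-impl-anc} and Lemma~\ref{lem:round-elim}. The only points that need care are bookkeeping ones: checking that $\vote(x,w)$ and $\election(x,y,t,f)$ are in fact defined (which is exactly what $i + d \leq j - 1$ provides), making sure that "$\no$" is genuinely \emph{forced} rather than merely "not $\yes$" (so one does use that votes range over $\{\yes,\no\}$), and getting the inclusion $W' \subseteq W$ in the right direction when applying Lemma~\ref{lem:round-elim}.
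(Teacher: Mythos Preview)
Your proposal is correct and follows essentially the same approach as the paper: voters for $y$ are ancestors of $y$, hence cannot be descendants of $x$, so by Lemma~\ref{lem:yes-vote-impl-anc} they all vote $\no$, and since $j$ is not a coin round $y$ decides $\no$. You are a bit more explicit than the paper in invoking Lemma~\ref{lem:round-elim} to ensure $f > \twothirds \cdot N$, which the paper leaves implicit in the phrase ``$y$ will decide no.''
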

\begin{proof}
Any voter sending votes to $y$ must be an ancestor of $y$, and
therefore it cannot be a descendant of $x$.  Thus, by
Lemma~\ref{lem:yes-vote-impl-anc}, $y$ will see only no votes.  Since
$j$ is a round in which one is permitted to decide, $y$ will decide no.
\end{proof}

Thus, if by round $i+d+1$ there remain any witnesses that have not
heard of $x$, then $x$ is not famous.  Consequently, a late arrival to
the network cannot be famous.

\subsubsection{Termination}
\label{sec:termination}

The heart of the termination argument is the observation that
eventually (with probability one) there will come a round in which all
of the honest coin flippers will agree with the previous round's
majority.  Making this observation rigorous involves some subtlety.
The first subtlety is the probability assumption itself:

First, we assume that honest coins come up heads with probability
$1/2$.  This is actually more subtle than it may sound, because coin
flips are actually only pseudo-random.  Nevertheless, we assume that
on honest peers they behave as though they are truly random.
Certainly any peer can easily dictate its event's coin flip (which is
drawn from a cryptographic hash of the event) by tweaking the event's
payload, but we assume honest peers will not do that.  Beyond that, in
principle an adversary might orchestrate the network in order to
control the payloads and thereby dictate the coin flips.  We assume
that it has insufficient power over the network to do so.

Second, we assume that honest coins are independent of probability
events that are already settled.  Suppose $P$ is a probability event
and $x$ is an honest Hashgraph event.  We say that $P$ is {\em settled
  before\/} $x$ if 
by the time $x$ was created, it was already determined whether
$P$ takes place or not.
If $P$ is settled before $x$, we assume that
$x$'s coin is independent of $P$
(since $P$ cannot look into the future and be affected by $x$, whereas $x$,
being a fair coin, does not look into the past.)

Moreover, if some $Q$ implies that
$P$ is settled before $x$, then $x$'s coin is {\em conditionally\/}
independent of $P$, assuming $Q$.

A simple consequence of the latter assumption is that honest coins are
independent, but it also follows that honest coins are independent of
the operating of the algorithm before the coin is flipped.

It is convenient to gather all our probabilistic reasoning into one lemma:

\begin{definition}
We say that {\em coins are good} if for every round-$i$ witness $x$,
there exists some round $j > i$ and some boolean $\beta$ such that:

\begin{enumerate}
\item
$(j - i) \modulo c = 0$ ({\em i.e.,} $j$ is a coin round),

\item
$\beta = {\sf yes}$ if and only if $y$ receives as many yes votes for
$x$ as no votes, where $y$ is the first round-$j$ witness, and

\item
for every honest round-$j$ witness $w$, if $w$ is timely in the sense
that it was created before any witnesses in round $(j+2)$, then
$\coin(w) = \beta$.
\end{enumerate}
\end{definition}

\begin{theorem}
With probability one, coins are good.
\end{theorem}
\begin{proofsketch}
For any round $j$, let $P_j$ be the probability event ``round $j$'s
first witness receives as many yes votes as no votes.''  
Let $Q_j$ be the probability event ``all honest, timely round-$j$ coin
flips agree with $P_j$.''

Observe that $P_j$ depends only on events created earlier than the
first round-$j$ witness, so it is settled before any round-$j$ event
is created.  Thus, by the probability assumption, $P_j$ is independent
of all honest round-$j$ coin flips.  Similarly, $\neg P_j$ is also
independent of all honest round-$j$ coin flips.  Therefore, for every
round $j$, the probability of $Q_j$ is at least $2^{-N}$, since there
can be no more than $N$ honest, timely witnesses in any round.  Each
of the $Q$s are also independent (the timeliness proviso
ensures that each coin round is settled before the next).
Therefore, $Q$ will eventually hold.
\end{proofsketch}

In any remaining results that rely on probability, we take good coins
as an antecedent.  This will allow us to use purely logical reasoning
in the sequel.

\begin{theorem}[Termination]
\label{thm:termination}
Suppose good coins.
Then for every witness $x$, there exists $z$ and $\beta$ such that
$\decide(x, z, \beta)$.
\end{theorem}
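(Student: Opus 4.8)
The plan is to exploit the good coin round supplied by Lemma~\ref{lem:good-coins} to force the election into unanimity among honest voters, then let that unanimity spread for two more rounds until some witness is forced to decide. Fix a round $i$ witness $x$ and apply Lemma~\ref{lem:good-coins} to obtain a coin round $j$ (so $(j - i) \modulo c = 0$, and since $j > i$ this forces $j - i \geq c \geq d + 3$, hence $i + d < j$), a round $j$ witness $y$, and a boolean $\beta$ such that $\beta = \yes$ iff $y$ receives at least as many yes votes on $x$ as no votes, and $\coin(w) = \beta$ for every honest round $j$ witness $w$. The goal is to produce a round $j+2$ witness $z$ with $\decide(x, z, \beta)$.

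First I would show that every honest round $j$ witness $w$ casts $\vote(x, w) = \beta$. Since $j$ is a coin round, $w$ votes $\yes$ if its yes-tally on $x$ exceeds $\twothirds \cdot N$, $\no$ if its no-tally does, and $\coin(w)$ otherwise; in the last case $\coin(w) = \beta$ by construction. In the first case, the set of round $j-1$ witnesses that $w$ strongly sees and that vote yes on $x$ is supermajor, and by Lemma~\ref{lem:round-elim} the set of round $j-1$ witnesses that $y$ strongly sees is also supermajor; exactly as in the proof of Lemma~\ref{lem:decision-vote-consistency}, two supermajor subsets of the round $j-1$ witnesses overlap in a majority of each, so a majority of $y$'s voters vote yes, $y$ sees at least as many yeas as nays, and thus $\beta = \yes$; so $w$ again votes $\beta$. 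The no-tally case is symmetric, giving $\beta = \no$. (These two tally-driven cases cannot simultaneously arise for different witnesses, since their yes- and no-voter sets among the round $j-1$ witnesses, both supermajor, would have to intersect in a witness voting both ways on $x$; so $\beta$ is determined consistently.)

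Next I would propagate. Let $v$ be any round $j+1$ witness. By Lemma~\ref{lem:round-elim}, $v$ strongly sees a supermajor set $W$ of round $j$ witnesses; by Lemma~\ref{lem:strongly-seeing} the creators in $W$ are distinct, so since the honest peers form a supermajority, more than $\nicefrac{1}{3} \cdot N$ members of $W$ are honest while fewer than $\nicefrac{1}{3} \cdot N$ are dishonest. The honest members all vote $\beta$, so $v$'s $\beta$-tally strictly exceeds its opposite tally; since $j+1$ is not a coin round, $v$ votes $\beta$ as well. Hence \emph{every} round $j+1$ witness votes $\beta$. Now take any round $j+2$ witness $z$, which exists by Lemma~\ref{lem:progress}: by Lemma~\ref{lem:round-elim}, $z$ strongly sees a supermajor set of round $j+1$ witnesses, and all of them vote $\beta$, so $z$'s $\beta$-tally already exceeds $\twothirds \cdot N$ while its opposite tally is $0$. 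Since $i + d < j+2$ and $j+2$ is not a coin round (because $(j + 2 - i) \modulo c = 2 \neq 0$, as $c \geq d + 3 > 2$), we obtain $\decide(x, z, \beta)$.

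I expect the main obstacle to be the first step: showing that every honest round $j$ witness votes $\beta$. The delicate point is reconciling the two routes by which a voter in a coin round reaches its vote — following a supermajority tally versus flipping the coin — and proving that a voter who goes by the tally necessarily agrees with the coin direction $\beta$ as computed from $y$. This is precisely where the majority-overlap argument and part~(2) of Lemma~\ref{lem:good-coins} are needed, together with the observation that the two mechanisms can never pull honest voters in opposite directions. The subsequent two rounds are routine counting using Lemma~\ref{lem:strongly-seeing} and Lemma~\ref{lem:round-elim}.
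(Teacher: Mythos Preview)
Your proof is correct and follows the same overall strategy as the paper---use Lemma~\ref{lem:good-coins} to align the coin round, then propagate until a decision is forced---but with one real difference: you only establish $\vote(x,w)=\beta$ for \emph{honest} round-$j$ witnesses and consequently wait until round $j+2$ to decide, whereas the paper's proof asserts that \emph{every} round-$j$ witness votes $\beta$ and therefore decides already at round $j+1$. Your caution here is warranted: Lemma~\ref{lem:good-coins} only constrains $\coin(w)$ for honest $w$, so a dishonest round-$j$ witness that lands in the coin case need not vote $\beta$; the paper's informal sentence ``If $w$ uses its coin, the result is immediate'' glosses over this. Your extra round, together with the counting argument that every round-$(j{+}1)$ voter sees strictly more than $\nicefrac{1}{3}\cdot N$ honest (hence $\beta$-voting) electors and strictly fewer than $\nicefrac{1}{3}\cdot N$ dishonest ones, gives a clean majority and repairs the gap at the cost of one additional round. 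Everything else---the overlap argument reconciling the supermajority case with $y$'s majority, the use of Lemma~\ref{lem:round-elim} and Lemma~\ref{lem:strongly-seeing}, and the check that $j+1$ and $j+2$ are non-coin rounds---matches the paper's reasoning.
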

\begin{proof}
Let $i$ be the round of $x$.  Then let $j$, $y$, and $\beta$ be as
given by good coins.  Let $z$ be the first round-$(j+2)$ witness.

Suppose $w$ is an honest round-$j$ witness such that $w \leq z$.
We claim that $w$ will vote $\beta$.

\begin{itemize}
\item
Suppose $w$ sees a supermajority.  As we have seen before, when there
is a supermajority, every witness (such as $y$) will receive a
majority that agrees with the supermajority.  Thus, the supermajority
$w$ receives must agree with $\beta$.

\item
Alternatively, suppose $w$ uses its coin.  We must show that $w$ is
timely; it then follows that $\coin(w) = \beta$.  Since $w \leq z$,
$w$ was created earlier than $z$.  But $z$ is the first of its round,
so $z$ was created no later than any other round-$(j+2)$ witness.
\end{itemize}

Now suppose $u$ is a round-$(j+1)$ witness such that $u \leq z$.  The
electors sending votes to $u$ are a supermajor set, and a
supermajority of the peers are honest, so a majority of the votes come
from honest events.  All of them (by transitivity) are ancestors of
$z$, so (by the above) $u$ sees a $\beta$ majority.  Thus $u$ votes
$\beta$.  Consequently, $z$ receives only $\beta$ votes, so $z$
decides $\beta$.
\end{proof}

At this point we know that all peers can agree on the identity of the
famous witnesses: For any given witness, eventually someone will
decide (Theorem~\ref{thm:termination}).  Once someone decides, everyone
else will make the same decision in short order
(Corollary~\ref{cor:propagation}).  Once a peer has settled the fame
of every witness it has heard of, it can consider itself done, since
any additional witnesses it hasn't heard of are guaranteed not to be
famous (Corollary~\ref{cor:no-late-fame}).  Moreover, at least one
famous witness will exist (Theorem~\ref{thm:famous-exists}).

\subsection{Round Received}

At this point the real work is done.
Next, we identify at most one famous witness per peer as a {\em unique
  famous witness.}  If a peer has only one famous witness (as will be
always be the case for honest peers), that famous witness will be
unique.  In the unlikely event that a peer has multiple famous
witnesses, one of them is chosen to be unique in an arbitrary but
deterministic manner.  (Textual comparison of the data that expresses
the event will do fine.)

Then, using the unique famous witnesses, we define an event's {\em round
  received:}

\begin{definition}
Suppose $x$ is an event.  The {\em round that $x$ is received by the
  network\/} is the earliest round $i$ for which all the round-$i$
unique famous witnesses are descendants of $x$.
\end{definition}

\subsection{The Consensus Timestamp}

Suppose $x$ is an event that is received in round $i$.  We compute the
consensus timestamp for $x$ as the median of the timestamps assigned
to $x$ by the unique famous witnesses of round $i$.  (If there are an
even number of unique famous witnesses, we take the median to be the
smaller of the two central elements.  This provides the useful property
that the consensus timestamp is the timestamp from some particular
peer.)

Suppose $y$ is a round-$i$ unique famous witness.  The timestamp $y$
assigns to $x$ is the timestamp of the earliest $z$ such that $x \anc
z \selfanc y$.  Note that $z$ certainly exists, since $x \anc y$, and
that $z$ is uniquely defined since the self-ancestors of $y$ are
totally ordered.

\subsection{The Consensus Order}

We can now define the consensus order:

\begin{definition}
Suppose $x$ and $y$ are events.  Let $i$ and $i'$ be the rounds $x$ and
$y$ are received by the network.  Let $t$ and $t'$ be the consensus
timestamps of $x$ and $y$.  Then {\em $x$ precedes $y$ in the
consensus order\/} if:
\begin{itemize}
\item
$i < i'$, or

\item
$i = i'$ and $t < t'$, or

\item
$i = i'$ and $t = t'$ and $x$ is less than $y$ using the
  arbitrary comparison used to select unique famous witnesses.
\end{itemize}
\end{definition}

It is clear that the consensus order is a total order.  Since all
peers agree on the unique famous witnesses, they all agree on
the consensus order.  Finally, the order is immutable:  the unique
famous witnesses do not change once determined, and their ancestries
never change either.

\section{Formalization}
\label{sec:formalization}

Many interesting issues arose in the course of formalizing the
Hashgraph consensus algorithm.
Our work builds on lessons learned from a previous effort to verify
Hashgraph by Gregory Malecha and Ryan Wisnesky.  In that earlier
effort the algorithm was expressed by implementing it as a function in
Gallina (the Rocq specification language).  This allowed the code to
be directly executed within Rocq, and it was thought that such code
could be more easily related to the actual code in the commercial
Hashgraph implementation, Hedera.

However, the specification as a Gallina implementation was challenging to
work with.  Although the effort did not hit a show-stopping problem,
the concreteness of the implementation made it clumsy to work with.
Moreover, expressing the algorithm using recursive functions,
rather than inductive relations, meant that one was denied nice
induction principles.

In this work we started over with new definitions, making events and
other objects of interest abstract, and using inductive
relations for most definitions.  This allowed for streamlined proofs
that usually closely resemble the human-language proofs, and nice
induction principles in most cases.  (An important exception,
induction over votes, is discussed in Section~\ref{sec:formal-voting}.)

The axioms that define hashgraphs, real arithmetic, and probability
are just 305 lines.  This is the
trusted part of the development, apart from Rocq itself.

The full proof is 24 thousand lines of Rocq (version 8.9.1), including
comments and whitespace.  Not all of these lines are
allocated to matters that are conceptually interesting: About 3
thousand lines are dedicated to defining and establishing
properties of the cardinalities of sets, and of medians.  About 6
thousand more are dedicated to basic probability.

\medskip

We will discuss changes we made from the original algorithm in
Baird~\cite{baird:hashgraph}, and then touch on interesting points
that arose during the formalization.  The survey here will not be
complete; the complete development is available at:

\begin{center}
\url{cs.cmu.edu/~crary/papers/2026/hashgraph-formal.tgz}
\end{center}

\subsection{Changes from the original algorithm}

As typically happens when formalizing a human-language proof, we did
uncover a few errors, but they were minor and easily
corrected.\footnote{These errors pertain to the batch-mode algorithm
  discussed here, not to the online algorithm that the Hedera
  implementation is based on.  We discuss some of the differences in
  Section~\ref{sec:future-work}.  For example, the online algorithm
  defines strongly seeing somewhat differently.}  These
corrections are already reflected in Section~\ref{sec:algorithm}:

\begin{enumerate}
\item
The original definition of strongly-sees was a bit different: it said
that $y$ strongly sees $x$ if $x \see z \see y$ (for all $z$ in
some supermajor set).  But that definition didn't provide the
``stickiness'' property of Lemma~\ref{lem:stsee-properties} (that is,
$x \stsee y \anc w$ implies $x \stsee w$) since $w$ might observe a
fork that $y$ does not.

\item
Because of that change, strongly-seeing does not necessary imply
seeing.  The original rule for first-round voting required that for
$y$ to vote yes on $x$, $y$ must see $x$ and not merely be a
descendant of $x$.  But that meant that $y$ strongly seeing $x$ was
not enough for $y$ to vote yes for $x$, which breaks the proof of the
existence of famous witnesses (Theorem~\ref{thm:famous-exists}).

\item
Originally, when a peer had multiple famous witnesses, instead of
choosing one to consider unique, none of them were.  But then it was
not obvious that unique famous witnesses always exist.

\end{enumerate}

In addition to these changes, we added a careful proof of Progress
(Lemma~\ref{lem:progress}).  Also, a trivial difference is our first
round is 0, instead of 1.  This is preferable so every natural number
is a round number.

\subsection{Peers and Events}

To illustrate the abstract style we used to formalize the algorithm,
these are definitions of peers and events:

\begin{strictcode}
Parameter peer : Type.

Axiom peer_eq_dec
  : forall (a b : peer), {a = b} + {a <> b}.
endstrictcode

The axiom states that it is decidable whether two peers are the same.
This is a good example of
the sort of uncontroversial mathematical assumption we alluded to
in Section~\ref{sec:algorithm}.  A typical human-language proof would
not bother to make such an assumption explicit.

\begin{strictcode}
Parameter number_peers : nat.

Axiom number_of_peers
  : cardinality (@every peer) number_peers.

Axiom number_peers_minimum : number_peers >= 2.
endstrictcode

There exists a natural number that is the number of peers, and that
number is at least two.  (In the preceding we called it $N$.)  The
code ``\cd{@every peer}'' denotes the set of all peers.

\begin{strictcode}
Parameter honest : peer -> Prop.

Axiom supermajority_honest
  : supermajority honest every.
endstrictcode

The set of honest peers is a supermajority of the set of all peers.

\begin{strictcode}
Parameter event : Type.

Axiom event_eq_dec
  : forall (e f : event), {e = f} + {e <> f}.

Parameter creator : event -> peer.
endstrictcode

The equality of events is decidable, and every event has a creator.

\begin{strictcode}
Parameter parents
  : event -> event -> event -> Prop.

Axiom parents_fun :
  forall e e' f f' g,
    parents e f g
    -> parents e' f' g
    -> e = e' /\ f = f'.

Axiom parents_creator :
  forall e f g,
    parents e f g
    -> creator e = creator g.
endstrictcode

We write \cd{parents e f g} when {\tt e} and {\tt f} are the
self-parent and other-parent of {\tt g}.  The axioms say parents
are uniquely defined, and the self-parent has the same creator as the
event.

\begin{strictcode}
Definition initial (e : event) : Prop :=
  ~ exists f g, parents f g e.

Axiom initial_decide :
  forall e,
    initial e \/ exists f g, parents f g e.
endstrictcode

An initial event is one without parents, and it is decidable whether
an event is initial.

\begin{strictcode}
Inductive parent : event -> event -> Prop :=
| parent1 {e f g} :
    parents e f g
    -> parent e g

| parent2 {e f g} :
    parents e f g
    -> parent f g.

Definition self_parent (e f : event) : Prop :=
  exists g, parents e g f.

Axiom parent_well_founded
  : well_founded parent.
endstrictcode

We say \cd{parent x y} when {\tt x} is either parent of {\tt y}, and
\cd{self_parent x y} when {\tt x} is the self-parent of {\tt y}.  We
assume that the parent relation is well-founded, which means we can do
induction using that relation.  (In classical logic, that is equivalent
to saying there are no infinite descending chains.)  From that
assumption, we can show that the self-parent, strict ancestor, and
strict self-ancestor relations are also well-founded.

\subsection{Rounds}

In Baird~\cite{baird:hashgraph} the definitions of rounds and
witnesses are mutually dependent.  As here, a witness was the first
event created by a peer in a round.  Unlike here, to advance to the
next round an event would need to strongly see many {\em witnesses\/}
in the current round, while we require it only to strongly see many
{\em events} in the current round.  It is not hard to see that the definitions are
equivalent, but our version has the virtue that rounds can be defined
without reference to witnesses.

Our formalization of rounds then is:

\begin{strictcode}
Inductive round : nat -> event -> Prop :=
| round_initial {x} :
    initial x
    -> round 0 x

| round_advance {x y z m n A} :
    parents y z x
    -> round m y
    -> round n z
    -> supermajority A every
    -> (forall a,
          A a
          -> exists w,
               creator w = a
               /\ stsees w x  (* w << x *)
               /\ round (max m n) w)
    -> round (S (max m n)) x.
endstrictcodebeginstrictcode
| round_nadvance {x y z m n A} :
    parents y z x
    -> round m y
    -> round n z
    -> superminority A every
    -> (forall a w,
          A a
          -> creator w = a
          -> stsees w x
          -> exists i,
               i < max m n /\ round i w)
    -> round (max m n) x
endstrictcode

The first case says initial events belong to round 0.  In each of the
other two cases, $x$'s parents' rounds are $m$ and $n$.  Then, in the
second case, we advance to the next round ({\em i.e.,} give $x$ the
round $\max(m, n) + 1$) if there exists a set of peers $A$, comprising
a supermajority of the peers, such that for each peer $a \in A$, there
exists $w$ created by $a$ in round $\max(m,n)$ where $w \stsee x$.
The third case expresses the negation of that condition.  There
$S$ is a ``superminority'' of $T$ if $S$ contains at least one-third
of $T$'s elements.  That is, a superminority is a subset large enough to
deny a supermajority to its complement.

It is necessary to express the conditions for advancing or not
advancing to the next round as a pair of affirmative properties.  One
might imagine saying something like:

\begin{strictcode}
...
-> (b = true
    <->
    (.. x stsees enough from this round ..))
-> round (max m n + if b then 1 else 0) x
endstrictcode

\noindent
but with such a definition, the ``strongly sees enough'' portion could
not be implemented, since it could not mention rounds.  Any mention
of rounds in that position would be a non-positive occurrence of the
relation being defined, which is not permissible in an inductive
definition.

The benefit to going to the trouble to use an inductive definition
is one can do induction over the derivation of an event's round.  When
one is reasoning about rounds, this produces exactly the cases you
want: one base case, one inductive case where you advance, and one
where you don't.

\subsection{Voting}
\label{sec:formal-voting}

Expressing voting as an inductive definition is a bit tricky.  We
gave {\tt vote} the type:
\begin{code}
sample -> event -> event -> bool -> Prop
\end{code}
\noindent
Here, \cd{vote s x y b} means $y$'s vote on whether $x$ is
famous is $b$.  The $s$ is a point in the sample space, and can be
safely ignored for now.  (We discuss the sample space in
Section~\ref{sec:probability}.)

The case for the first round of voting is straightforward:

\begin{strictcode}
| vote_first {s x y m n v} :
    rwitness m x
    -> rwitness n y
    -> m < n
    -> m + first_regular >= n
    -> (Is_true v <-> x @ y)
    -> vote s x y v
endstrictcode

\noindent
Here, \cd{first_regular} is the formalization's name for the parameter
$d$, \cd{rwitness m x} means that $x$ is a round $m$ witness, and
\cd{@} means strict ancestor ($\sanc$, formalized as the transitive closure of 
{\tt parent}).  Then $y$ votes yes if $y$ is a strict descendant of $x$, and no
otherwise, provided $\round(x) < \round(y) \leq \round(x) +
d$.\footnote{This is a minor departure from the definition in
  Section~\ref{sec:voting}.  It
is convenient to allow witnesses to vote before the required $d$ rounds have
elapsed, but such early votes are ignored.}

The complications begin in the next case:

\begin{strictcode}
(* not a coin round *)
| vote_regular {s x y m n t f v} :
    rwitness m x
    -> rwitness n y
    -> m + first_regular < n
    -> (n - m) mod coin_freq <> 0
    -> election (vote s x) (pred n) y t f
    -> ((t >= f /\ v = true)
         \/ (f > t /\ v = false))
    -> vote s x y v
endstrictcode

\noindent
The first four premises say that more than $d$ rounds have elapsed
since $x$, and it is not currently a coin round.  Then
\cd{election (vote s x) (pred n) y t f}
says that when $y$ collects votes on $x$ from the previous round, it
receives $t$ yeas and $f$ nays.  Then $y$'s vote is true if $t \geq
f$ and false otherwise.

The trickiness lies in {\tt election}.  It has type:

\begin{code}
(event -> bool -> Prop)
  -> nat -> event -> nat -> nat -> Prop
\end{code}

\noindent
The first argument abstracts over the recursive call to {\tt vote}, in
order to disentangle {\tt election} and {\tt vote}.  We fill it in
with \cd{vote s x} to give it access to everyone's votes on $x$.  The
second argument is the round to collect votes from; we fill it in with
$\round(y) - 1$.

An auxiliary definition, \cd{elector n w y}, specifies the round $n$
witnesses $w$ that can send votes to $y$:

\begin{strictcode}
Definition elector (n : nat) (w y : event) :=
  rwitness n w /\ stsees w y.
endstrictcode

\noindent
At
this point, we might imagine defining \cd{election V n y t f} as
follows:

\begin{strictcode}
cardinality
  (fun w =>
     elector n w y /\ V w true) t
/\
cardinality
  (fun w =>
     elector n w y /\ V w false) f
endstrictcode

\noindent
Indeed, this is a perfectly good definition.  However, if we define
{\tt election} this way, {\tt vote} is not allowed to call it as
above.  Since {\tt vote} calls {\tt election} with a recursive
instance of {\tt vote}, {\tt election} must use $V$ only positively.
This definition does not, because cardinality does not use its
first argument only positively.  Thus {\tt vote} would be ill-defined.

Instead, we code around the problem:

\begin{strictcode}
Definition election 
  (V : event -> bool -> Prop) 
  (n : nat) (y : event) (t f : nat)
  :=
  exists T F,
    cardinality
      (fun w => elector n w y) (t + f)
    /\ cardinality T t
    /\ cardinality F f
    /\ (forall w, T w
          -> elector n w y /\ V w true)
    /\ (forall w, F w 
          -> elector n w y /\ V w false)
endstrictcode

\noindent
We existentially quantify over two sets $T$ and $F$, where
the intention is that $T$ is the yeas and $F$ the nays.  The
final two conjuncts check that every element of $T$ ($F$) is indeed an
elector and votes yes (no).  The previous two conjuncts check that the
cardinalities of $T$ and $F$ are $t$ and~$f$.

Finally, the first conjunct ensures that we are not missing any votes:
If we assume that $V$ does not allow a single witness to vote both
ways (which \cd{vote s x} will not), it follows that $T$ and $F$ are
disjoint.  Thus, in order to check that every vote is accounted for,
we need only check that there are $t + f$ electors total.

\smallskip

The remaining two cases of {\tt vote} implement coin rounds.  They
call {\tt election} in the same manner as above, in order to find out
whether a supermajority already exists.

\begin{strictcode}
(* coin round but supermajority exists *)
| vote_coin_super {s x y m n t f} {v : bool} :
    rwitness m x
    -> rwitness n y
    -> m < n
    -> (n - m) mod coin_freq = 0
    -> election (vote s x) (pred n) y t f
    -> (if v then t else f)
         > two_thirds number_peers
    -> vote s x y v
endstrictcode
\begin{strictcode}
(* coin round and no supermajority exists *)
| vote_coin {s x y m n t f} :
    rwitness m x
    -> rwitness n y
    -> m < n
    -> (n - m) mod coin_freq = 0
    -> election (vote s x) (pred n) y t f
    -> t <= two_thirds number_peers
    -> f <= two_thirds number_peers
    -> vote s x y (coin y s)
endstrictcode

\noindent
In the final case, \cd{coin y s} gives $y$'s pseudo-random coin flip.

\paragraph{Induction}

Although Rocq accepts the definition of {\tt vote}, since all its
recursive occurrences are positive, the definition is still too
complicated for Rocq to give it a useful induction principle.  It does,
however, provide a useful case-analysis principle.  Thus, Rocq
essentially promises that {\tt vote} is a fixed point, but not a
{\em least\/} fixed point.

This is inconvenient, because there are many times one would like to
do induction over votes.  Fortunately, we can work around the
problem.  The recursive instances of {\tt vote} always deal with
strict ancestors of the vote in question, so one can employ well-founded
induction using the strict-ancestor relation.  Within that induction, one can
then do a case analysis over the vote in question.  This provides the
power of the induction principle that one would have liked Rocq to
provide automatically.

\subsection{Worlds}

We use {\em worlds\/} to talk about potentially incompatible
evolutions of the hashgraph.  (The term is motivated by the use of the
term in Kripke models.)  In a modest loosening of the rules from our
informal presentation, we allow that an event $x$ might have two
distinct self-children $y$ and $z$, even if $x$'s creator is honest,
provided that $y$ and $z$ exist only in distinct futures.  However, $y$
and $z$ can never coexist in the same future (again, if the creator is
honest).  That is, two events {\em in the same world\/} cannot form a
fork on an honest peer.

Thus, a world is a set of events that is closed under ancestor, and
contains no forks:\footnote{Note that nothing requires a world to
contain {\em all\/} the
events, even at a particular point in time.  Thus, for example, we can
also use worlds to express the knowledge that a particular peer has
about the hashgraph.}

\begin{strictcode}
Definition fork (e f : event) : Prop :=
  creator e = creator f
  /\ ~ e $= f
  /\ ~ f $= e.

Record world : Type :=
mk_world
  { member : event -> Prop;

    world_closed :
      forall x y,
        x @= y -> member y -> member x;

    world_forks :
      forall a,
        honest a
        -> ~ exists e f,
               member e /\ member f /\
               fork e f /\ creator e = a }.
endstrictcode

\noindent
Here \cd{@=} means ancestor ($\anc$), and \cd{$=} means self-ancestor
($\selfanc$).

Many facts rely only on the events themselves, not any worlds they
inhabit, but some require a world assumption.  For
example, consider the strongly seeing lemma (Lemma~\ref{lem:strongly-seeing}):

\begin{strictcode}
Lemma strongly_seeing :
  forall W x y v w,
    member W v
    -> member W w
    -> fork x y
    -> stsees x v
    -> stsees y w
    -> False.
endstrictcode

\noindent
Recall that, in the proof, we obtained events $z$ and $z'$, sharing the same honest
creator, where $x \see z \anc v$ and $y \see z' \anc w$.  We concluded
that either $z \selfanc z'$ or $z' \selfanc z$, since otherwise $z$
and $z'$ would constitute a fork on a honest peer.  But in the
formalization, forks are possible---even on an honest peer---if the
sides of the fork belong to different futures.  Requiring $v$ and
$w$ (and therefore $z$ and $z'$) to belong to the same world rules that out.

\subsection{Probability}
\label{sec:probability}

For the probabilistic elements of the proof, we employ a lightweight
probability library.  The library is lightweight in three senses:

\begin{enumerate}
\item
The axioms of real arithmetic, measure theory, and probability are
given as Rocq axioms.  We did not build a model of those notions in Rocq.

\item
The library uses only basic Rocq: just definitions, and theorems using
Ltac.  Since we do not use Rocq's more advanced features, the library
is easy to use and is robust to changes in Rocq.

\item
The library contains only what was necessary for its application to
this work.  For example, the library contains no treatment of
continuous probability.
\end{enumerate}

The treatment of real arithmetic makes some effort to respect
constructive niceties.\footnote{One axiom ($x \neq y$ implies $x < y
  \mathrel{\vee} y < x$) requires Markov's principle, which is
  quasi-constructive in the sense that it does not impede the
  extraction of computational content.  The rest are
  intuitionistically valid.  The (resolutely classical)
  least-upper-bound axiom ({\em i.e.,} every bounded, nonempty set has
  a least upper bound) takes the law of the excluded middle as a
  premise, so it is intuitionistically valid, vacuously.}  However,
the treatment of probability theory is fully classical, as we did not
intend to explore the issues of constructive probability in this work.
Any proof of a probability measure may employ the law of the excluded
middle.  Nevertheless, since the probability measure does not provide
computational content anyway, one might argue that little is sacrificed
by permitting classical reasoning while establishing probability
measures.

\medskip

As usual, the central elements of probability are a sample space and a
probability predicate.  A point in the sample space determines the
outcome of every random or nondeterministic action, as well as the
behavior of the adversary.  The probability predicate relates subsets
of the sample space to real numbers.

\begin{code}
sample : Type
number : Type
prob : (sample -> Prop) -> number -> Prop
\end{code}

A probability event is a subset of the sample space.  A probability
event might be unmeasurable, but the probability of measurable events
is unique.

Since a point in the sample space determines everything that takes
place, and specifically determines what events get created, a sample
determines a world.  We call this the {\em global\/} world determined
by that sample.

\begin{code}
global : sample -> world
\end{code}

\noindent
For now we can think of {\tt global} as primitive, but once we develop
some more machinery, we can actually define it.

At this point we can state the good-coins predicate along the lines we
discussed in Section~\ref{sec:termination}:

\begin{bigstrictcode}
Definition good_coins (s : sample) : Prop :=
  forall m x,
    rwitness m x
    -> exists n y t f v,
         m < n
         /\ (n - m) mod coin_freq = 0
         /\ first_witness s n y
         /\ election (vote s x) (pred n) y t f
         /\ ((t >= f /\ v = true) 
             \/ (f > t /\ v = false))
         /\ forall w,
              member (global s) w
              -> rwitness n w
              -> honest (creator w)
              -> (forall z,
                    member (global s) z
                    -> rwitness (2 + n) z
                    -> realtime s w z)
              -> coin w s = v.
endbigstrictcode

The proposition \cd{realtime s w z}\, says that (given sample point
$s$), $w$ was created (non-strictly) before $z$.  (It is defined using
spawn order, which we discuss Section~\ref{sec:eventual-agreement}.)  The
proposition \cd{first_witness s n y}\, says that (given sample point
$s$) $y$ is the first (in the sense of {\tt realtime}) witness in
round $n$.

The eventual agreement theorem says that {\tt good\_coins} always
holds:

\begin{bigcode}
Theorem eventual_agreement : prob good_coins one.
\end{bigcode}

Eventual agreement establishes the first condition of the termination
theorem, but we state termination in absolute rather than
probabilistic terms:

\begin{strictcode}
Theorem fame_consensus :
  forall s x,
    good_coins s
    -> member (global s) x
    -> witness x
    -> exists y v,
         member (global s) y
         /\ decision s x y v.
endstrictcode

From the two theorems it follows\footnote{if we assume the probability
  measure is complete} that we reach consensus with probability one,
but it's preferable to state matters in logical rather than
probabilistic terms when possible because (1)~the result is slightly
stronger, and (2) logic is much easier to work with than probability.

\subsubsection{Eventual agreement}
\label{sec:eventual-agreement}

The heart of the eventual agreement proof is showing that the fair
coins are independent of each other (easy) and of the target result
{\tt v} (hard).  This in turn relies on an independence axiom:

\begin{code}
Axiom coin_independent :
  forall x P Q,
    honest (creator x)
    -> settledby P x Q
    -> measurable P
    -> measurable Q
    -> cindep (fun s => coin x s = true) P Q.
\end{code}

\noindent
Suppose $P$ and $Q$ are measurable probability events and $x$ is an
honest event.  Then the axiom says that $x$'s coin is conditionally
independent of $P$ (assuming $Q$), provided that $P$ is settled before
$x$ is created (assuming $Q$).

Using this axiom, we can reason about independence without knowing how
to compute the probability of $P$ and $Q$, which would require knowing
how the network and the adversary operate.  Instead, we merely assume
that neither can use information from the future.

\smallskip

We say that $P$ is settled before $x$ (assuming $Q$) if $P$ cannot
distinguish between sample points that satisfy $Q$ and are similar
until $x$:

\begin{code}
Definition settledby 
  (P : sample -> Prop) (x : event) 
  (Q : sample -> Prop) :=
    forall s s',
      Q s
      -> Q s'
      -> similar x s s'
      -> P s <-> P s'.
\end{code}

To define ``similar'' we need some auxiliary notions.  First, we
need a notion of time.  We get that from {\em spawn order}:

\begin{code}
spawn : sample -> nat -> event
\end{code}

\noindent
Suppose $s$ is a point in the sample space.  Then there exists a set
$S$ of all the events that will be created in the history $s$
determines.  (In order words, $S$ contains all the elements of
\cd{global s}.)  Sort $S$ according to the order the events are
created in real time to obtain $x_0, x_1, x_2, \ldots$.  Then we define
$\mbox{\cd{spawn s i}} = x_i$.  Obviously, the ``real time'' ordering
cannot be used by any participant; it is used only in reasoning about
the algorithm.

Several axioms 
(appearing in Figure~\ref{fig:spawn-axioms})
govern spawn order: (1) No event spawns more than once.
(2) An event's parents spawn
before it does.  
The next two
axioms are general rules of hashgraphs that are
convenient to formalize in terms of spawn order: 
(3) Honest peers do not create forks.  (4) Every honest peer
eventually communicates with every other honest peer.  (5) Finally, it is
convenient to exclude any event that is part of no future.

\begin{figure}[t]
\fighead
\begin{strictcode}
Axiom spawn_inj :
  forall s i j,
    spawn s i = spawn s j -> i = j.

Axiom spawn_parent :
  forall s x i,
    parent x (spawn s i)
    -> exists j, x = spawn s j /\ j < i.

Axiom spawn_forks :
  forall s i j,
    honest (creator (spawn s i))
    -> fork (spawn s i) (spawn s j)
    -> False.

Axiom honest_peers_sync :
  forall s i a b,
    honest a
    -> honest b
    -> a <> b
    -> exists j k,
         i <= j
         /\ j <= k
         /\ creator (spawn s j) = a
         /\ creator (spawn s k) = b
         /\ spawn s j @ spawn s k.

Axiom no_orphans :
  forall x, exists s i, x = spawn s i.
endstrictcode
\caption{Spawn order axioms}
\label{fig:spawn-axioms}
\figfoot
\end{figure}

We can define the {\tt realtime} order using spawn order in the
obvious way.
We can also define the global world \cd{global s} as the
set of all events $x$ such that \cd{exists i, spawn s i = x}.  The
\cd{spawn_parent} axiom provides the \cd{world_closed} specification,
and \cd{spawn_forks} provides \cd{world_forks}.

\smallskip

Now we say that $s$ and $s'$ are similar until $x$ if (1) $x$ spawns
at the same time in each (or does not spawn at all), and (2) until then,
both agree on the events that are spawned and their coins.  If
$x$ never spawns, then $s$ and $s'$ must agree forever. 

\begin{bigstrictcode}
Definition similar (x : event) (s s' : sample) :=
  (forall i, x = spawn s i <-> x = spawn s' i)
  /\
  (forall i,
     (forall j, x = spawn s j -> i < j)
     -> spawn s i = spawn s' i
        /\ coin (spawn s i) s 
           = coin (spawn s i) s').
endbigstrictcode

\noindent
As a boundary case, observe that $x$'s coin is {\em not\/} settled
before $x$.  Also note that other matters might differ between $s$ and
$s'$, but the consensus algorithm depends only on events and their
coins.

\section{Fairness}
\label{sec:fairness}

We also formalized an unpublished fairness theorem by
Baird~\cite{baird:personal-commo}.

\begin{lemma}
\label{lem:pre-fairness}
Suppose $d \geq 2$.  Suppose also that $x$ is a round $i$ witness.  If
there exists $y$ such that $x \stsee y$ and both parents of $y$
belong to a round no later than $i$, then $x$ will be famous.
\end{lemma}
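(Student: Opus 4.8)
The plan is to prove the much stronger statement that \emph{every event whose round is $i+2$ or later is a strict descendant of $x$}, and then conclude fame directly (by the simpler counting underlying Theorem~\ref{thm:famous-exists}). The hypothesis $d \geq 2$ is exactly what makes the first election round, round $i+d$, fall under this statement, so that every round-$(i+d)$ witness descends from $x$ and hence votes yes.

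First I would unpack the hypothesis. Since $x \stsee y$ there is a supermajor set $Z = \{\, z \sbar x \see z \anc y \,\}$, so the creators of $Z$ form a supermajority. Because both parents of $y$ are in round $\leq i$, every \emph{strict} ancestor of $y$ is in round $\leq i$; together with $x \see z$ (which gives $x \anc z$, hence $\round(z) \geq i$) this shows each $z \in Z$ has round exactly $i$, the one possible exception being $z = y$ (which still satisfies $i \leq \round(y) \leq i+1$ and $x \anc y$). So a supermajority of peers each already possess, by round $i$, a round-$i$ event descending from $x$ --- except possibly $\creator(y)$, which possesses $y$. Now I prove, by well-founded induction on $u$ along the parent relation, that $\round(u) \geq i+2$ implies $x \anc u$. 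If $u$'s round equals a parent's round, or $u$ advances from a round $\geq i+2$, then some parent of $u$ has round $\geq i+2$ and the induction hypothesis applies. Otherwise $u$ advances from round $i+1$, so there is a supermajor set $A$ of peers such that each $a \in A$ has a round-$(i+1)$ event $w_a$ with $w_a \stsee u$. The sets $A$, the creators of $Z$, and the honest peers are three supermajorities, so they have a common element $a$. For that $a$: it is honest and has $w_a$ as above, and it has the corresponding $z_a \in Z$ with $x \anc z_a$ and $\round(z_a) \leq i+1 = \round(w_a)$. Since $a$ is honest its events are totally ordered by self-ancestry; if $\round(z_a) = i+1$ then $z_a = y$, which (having a self-parent in round $\leq i$) is the earliest round-$(i+1)$ event of $a$, so in every case $z_a \selfanc w_a$. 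Hence $x \anc z_a \selfanc w_a$, and $w_a \stsee u$ gives $w_a \anc u$ by Lemma~\ref{lem:stsee-properties}(1), so $x \anc u$.

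To finish, pick a round-$(i+d+1)$ witness $z$ (one exists by Lemma~\ref{lem:progress}); round $i+d+1$ is not a coin round since $1 \leq d+1 < c$, using $c \geq d+3$. By Lemma~\ref{lem:round-elim} the set $W$ of round-$(i+d)$ witnesses strongly seen by $z$ contains a supermajor set, so $|W| > \twothirds \cdot N$. Since $d \geq 2$, every $w \in W$ has round $\geq i+2$, hence $x \anc w$ by the claim above; as round $i+d$ is the first voting round, this forces $\vote(x, w) = \yes$. Therefore $\election(x, z, t, f)$ holds with $f = 0$ and $t = |W| > \twothirds \cdot N$, and since $i+d+1$ is not a coin round this yields $\decide(x, z, \yes)$; so $x$ is famous.

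I expect the main obstacle to be the inductive step of the displayed claim, specifically verifying that the three-supermajorities-meet move lines up with the round structure --- that the $z_a$ handed to us by $Z$ is genuinely a self-ancestor of the round-$(i+1)$ event $w_a$ that $u$ strongly sees. The only threat is $z_a$ having round $i+1$, which pins $z_a = y$, and then the assumption that $y$'s parents lie in round $\leq i$ is precisely what makes $y$ the earliest round-$(i+1)$ event of its creator, so self-ancestry is restored. The remaining steps --- that $Z$'s creators form a supermajority, that $|W| > \twothirds \cdot N$ follows from the surjection of a supermajor witness set onto its creators, and that only the first-voting-round clause of $\vote$ applies at round $i+d$ --- are routine given the earlier lemmas.
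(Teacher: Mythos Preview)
Your argument is correct, and its heart---the three-supermajorities intersection yielding an honest peer whose events can be ordered by self-ancestry---is the same as the paper's. But the packaging is genuinely different. The paper works directly at round $i+d$: for an arbitrary round-$(i+d)$ witness $w$, it invokes Lemma~\ref{lem:round-elim} to produce a supermajor set $V$ of round-$(i+d-1)$ witnesses strongly seeing $w$, intersects the creators of $V$ with the creators of $U = \{u \sbar x \see u \anc y\}$ and the honest peers, and gets $u \in U$, $v \in V$ with a common honest creator. Then either $u \selfanc v$ (giving $x \anc w$ immediately) or $v \selfsanc u$, which forces $v \sanc y$ and hence $\round(v) \leq i$, contradicting $\round(v) = i+d-1 \geq i+1$. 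No induction, and the ``bad'' self-ancestry direction is dispatched in one line.

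Your route instead proves the stronger statement that every event of round $\geq i+2$ descends from $x$, by well-founded induction. The only substantive case is an event $u$ that advances from round $i+1$; there your third supermajority comes from the advance condition (round-$(i+1)$ events $w_a$ strongly seen by $u$) rather than from Lemma~\ref{lem:round-elim}. Because $w_a$ has round exactly $i+1$ rather than $\geq i+1$, you cannot simply contradict the bad self-ancestry case and must instead handle $z_a = y$ with $\round(y) = i+1$ by arguing $y$ is its creator's earliest round-$(i+1)$ event. This is sound but costs you an extra subcase. What you buy is a reusable lemma (``$x$ lies below everything from round $i+2$ on''), which is more than the theorem needs; the paper's direct argument is shorter and avoids the induction scaffolding entirely.
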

\begin{proof}
Suppose $w$ is an arbitrary round $i + d$ witness.  We claim that $x
\anc w$.  It follows that $w$ votes yes.  Since $w$ is arbitrary,
every round $i + d$ witness votes yes, so $x$ will be decided to be
famous in round $i + d + 1$.

Since $x \stsee y$, the set $U = \{ u \sbar x \see u \anc y \}$ is
supermajor.  By Lemma~\ref{lem:round-elim}, there exists a supermajor
set $V$ of round $i + d - 1$ witnesses such that forall $v \in V$, $v
\stsee w$.  The intersection of three supermajorities is nonempty, so
there exists an honest peer $a$ and events $u$ and $v$ created by $a$
such that $x \see u \anc y$ and $v \stsee w$.  Since $a$ is honest,
either $u \selfanc v$ or $v \selfsanc u$.

If $u \selfanc v$ then we are done, since $x \anc u \anc v \anc w$, so
let us assume $v \selfsanc u$.  Then $v \sanc y$.  Thus $v$ is an
ancestor of one of $y$'s parents, both of whom belong to a round no
later than $i$.
But $v$ belongs to round $i + d - 1$, which is at least $i + 1$ since
$d \geq 2$.  This is a contradiction, since $v$ cannot belong to a
later round than any of its descendants.
\end{proof}

\begin{theorem}[Fairness]
If $d \geq 2$ then every round's set of famous witnesses is supermajor.
\end{theorem}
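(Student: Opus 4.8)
The plan is to reduce the theorem to the pre-fairness lemma (Lemma~\ref{lem:pre-fairness}). That lemma tells us, given $d \geq 2$, that a round-$i$ witness is famous as soon as we can exhibit \emph{some} event $y$ that strongly sees it and whose two parents both lie in a round no later than $i$. So it suffices to produce, for a supermajority of peers $a$, a round-$i$ witness created by $a$ together with such a $y$; the key observation is that a single, carefully chosen $y$ can be made to work for all of them at once.

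Fix a round $i$. By Progress (Lemma~\ref{lem:progress}) round $i+1$ is inhabited, and since the strict-ancestor order is well-founded I can choose a round-$(i+1)$ event $z$ that is minimal in that order among all round-$(i+1)$ events. First I would check that both parents of $z$ belong to a round $\leq i$: a parent of round exactly $i+1$ would itself be a round-$(i+1)$ event strictly below $z$, contradicting minimality, while a parent $p$ of round $> i+1$ would, by Lemma~\ref{lem:round-elim}, be strongly seen by some round-$(i+1)$ witness $w$, whence $w \sanc p \sanc z$ (using Lemma~\ref{lem:stsee-properties}(1)), again contradicting minimality. Since $z$'s parents' rounds are $\leq i$ but $z$ itself is in round $i+1$, the round-advancement rule forces $z$ to strongly see a supermajor set $S$ of round-$i$ events; let $A$ be the (supermajority) set of creators of the members of $S$.

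Now for each $s \in S$ let $w_s$ be the first round-$i$ event on $s$'s self-ancestor chain. Such a $w_s$ exists because rounds are non-decreasing along ancestry and $s$ has round $i$, and it is a round-$i$ witness because its self-parent, if any, has round $< i$. Then $w_s \selfanc s \stsee z$, so Lemma~\ref{lem:stsee-properties}(2) gives $w_s \stsee z$; applying Lemma~\ref{lem:pre-fairness} to the witness $w_s$ with $y := z$ (using the hypothesis $d \geq 2$) shows $w_s$ is famous. Since $\creator(w_s) = \creator(s)$, the set of creators of famous round-$i$ witnesses contains all of $A$ and is therefore a supermajority, which is exactly the claim.

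The main obstacle I expect is bureaucratic rather than mathematical: in the Coq development, Lemma~\ref{lem:pre-fairness} (like Lemma~\ref{lem:strongly-seeing}) is relativized to a world, since the ``no forks on an honest peer'' step only holds within a fixed future. So I must take $z$ in the global world and then verify that every element of $S$ and every $w_s$ lies in that same world --- which holds because strongly-seeing implies ancestor and worlds are closed under ancestor, so $S$ and all the $w_s$ are ancestors of $z$ --- and thread the world hypothesis through the minimality argument and the invocation of pre-fairness.
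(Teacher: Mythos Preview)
Your proof is correct and follows essentially the same approach as the paper's: find a round-$(i+1)$ event whose parents lie in round at most $i$, use the round-advancement rule to extract a supermajor set of strongly-seen round-$i$ witnesses, and apply Lemma~\ref{lem:pre-fairness} with that event as $y$. Your explicit well-foundedness/minimality argument is just what the paper abbreviates as ``following back $x$'s ancestry''; note, incidentally, that your sub-case of a parent with round $> i+1$ is vacuous, since by the definition of rounds a child's round is always at least the maximum of its parents' rounds.
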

\begin{proof}
Let $i$ be arbitrary.  By Lemma~\ref{lem:progress}, there exists
a round $i+1$ witness $x$.  By following back $x$'s ancestry, we can
obtain a round $i+1$ witness $y$ (possibly $x$ itself), both of whose
parents belong to round at most $i$.  Since neither of $y$'s parents belong to
round $i+1$, $y$ must strongly see a supermajor set of round $i$
witnesses.  By Lemma~\ref{lem:pre-fairness}, all those witnesses will
be famous.
\end{proof}

The significance of the fairness theorem is that it means a majority of
the famous witnesses in any round must be honest.  Consequently, every
event's consensus timestamp is governed by honest peers.  The
timestamp might come from a dishonest peer, but even if so, it will be
bracketed on both sides by timestamps from honest peers.

The formalization of the fairness theorem and its consequences
regarding timestamps is just under 700 lines of Rocq.

\section{Additional Related Work}

Much recent work on Byzantine fault tolerance has set aside the
asynchronous model of the network in favor of partially synchronous
models~\cite{dwork+:consensus-partial-synchrony}. In partially
synchronous models, the time it takes to deliver a message is bounded,
but that bound cannot be fully exploited for some reason.  (For
example, the bound exists, but is not known.)  Partial synchrony
can allow for simpler consensus algorithms, such as
HotStuff~\cite{yin+:hotstuff}, PaLa~\cite{chan+:pala},
Streamlet~\cite{chan+:streamlet}, and Simplex~\cite{chan+:simplex}.

There has been some prior work on machine-checked verification of
Byzantine consensus protocols.
Kellom\"{a}ki~\cite{kellomaki:paxos-superposition-pvs} verified a
version of the Paxos protocol~\cite{lamport:paxos} using
PVS~\cite{owre+:pvs}.
Lamport~\cite{lamport:byzantizing-paxos-by-refinement} partially
verified the BFT
protocol~\cite{castro+:practical-byzantine-fault-tolerance} using
TLA+~\cite{cousineau+:tlaps}.

Both of those, like this work, verified theoretical presentations of
the protocol.  Hawblitzel, {\em et al.}~\cite{hawblitzel+:ironfleet}
verified the code for a practical implementation of a Paxos-like
algorithm written in Dafny~\cite{leino:dafny}.  To do something
similar for Hashgraph is the ultimate aim of this work.

\section{Further Developments and Future Work}
\label{sec:future-work}

The ultimate aim of this work is a fully verified implementation of
the Hashgraph consensus algorithm.  There are a number of differences between the algorithm
given here and the one that is implemented.  We have incorporated many
of them into the formalization already, but some others are future work.

\begin{itemize}
\item
We have completed a version of the algorithm that supports weighted peers
(a.k.a., proof-of-stake) instead of giving an equal weight to every
peer.  In that version, a supermajority of the weight must belong to
honest peers.  This change is largely straightforward, but there were
some complications in the re-proof of Theorem~\ref{thm:famous-exists}
stemming mainly from the need for a weighted version of the pigeonhole
principle.

\item
The algorithm here is a batch algorithm, while the implemented system
is online.  Moving to an online version requires two main
changes:

\begin{itemize}
\item
We cannot permanently blacklist peers who create a fork, since this
would require retaining information about them indefinitely.  Instead
we introduce a consistent way of establishing priority between the
forked events, and one can only ``see'' the higher priority event.
This is sufficient to reestablish Lemma~\ref{lem:strongly-seeing},
since the heart of the proof was the construction of an impossible
event that sees both sides of a fork, and that remains impossible.
This development is complete.

\item
Since the participants of the network change over time, we need a way
to deal with changing weights.  (When a participant leaves the
network, we can view that as its weight going to zero.)  This is
future work, although a prerequisite (having weights at all) is
already done.  The main complication is ensuring that all peers
always agree on all the weights, when the weights are determined by
previous transactions.
\end{itemize}

\item
The algorithm includes some operations that are expensive to perform
and that we want to avoid as much as possible.  A good example is
strongly seeing, which requires one to count all the different peers
one can pass through between one event and another.  That involves
exploring many different paths.  However, one can show that 
one can limit oneself to exploring certain canonical paths without
sacrificing any key properties.  A version incorporating this and
other optimizations is complete.
\end{itemize}

In addition, one would like to establish the fairness theorem for $d=1$.  This is an
important area for research, since $d=1$ is preferable (faster
consensus) but provably honest timestamps are also desirable.

\bibliographystyle{plain}
\bibliography{c:/crary/crary}

\begin{thebibliography}{10}

\bibitem{baird:personal-commo}
Leemon Baird.
\newblock Personal communication.

\bibitem{baird:hashgraph}
Leemon Baird.
\newblock The {Swirlds} {Hashgraph} consensus algorithm: Fair, fast,
  {Byzantine} fault tolerance.
\newblock Technical Report SWIRLDS-TR-2016-01, Swirlds Inc., May 2016.

\bibitem{baird+:hedera}
Leemon Baird, Mance Harmon, and Paul Madsen.
\newblock Hedera: A public {Hashgraph} network \& governing council.
\newblock White paper at
  \url{https://www.hedera.com/hh-whitepaper-v2.0-17Sep19.pdf}, August 2019.

\bibitem{baird+:hashgraph-measurements}
Leemon Baird and Atul Luykx.
\newblock The {Hashgraph} protocol: Efficient asynchronous {BFT} for
  high-throughput distributed ledgers.
\newblock In {\em {IEEE} International Conference on Omni-layer Intelligent
  Systems (COINS)}, 2020.
\newblock To appear.

\bibitem{coq}
Bruno Barras, Samuel Boutin, Cristina Cornes, {Judica\"{e}l} Courant,
  Jean-Christophe Filli\^{a}tre, Eduardo Gim\'{e}nez, Hugo Herbelin, G\'{e}rard
  Huet, C\'{e}sar {Mu\~{n}oz}, Chetan Murthy, Catherine Parent, Christine
  Paulin-Mohring, Amokrane {Sa\"{\i}bi}, and Benjamin Werner.
\newblock {\em The {Coq} Proof Assistant Reference Manual}.
\newblock {INRIA-Rocquencourt, CNRS and ENS Lyon}, 1996.

\bibitem{castro+:practical-byzantine-fault-tolerance}
Miguel Castro and Barbara Liskov.
\newblock Practical {Byzantine} fault tolerance.
\newblock In {\em Third Symposium on Operating Systems Design and
  Implementation}, New Orleans, Louisiana, February 1999.

\bibitem{chan+:simplex}
Benjamin~Y. Chan and Rafael Pass.
\newblock {Simplex} consensus: A simple and fast consensus protocol.
\newblock In {\em Theory of Cryptography: 21st International Conference},
  Taipei, Taiwan, 2023.

\bibitem{chan+:streamlet}
Benjamin~Y. Chan and Elaine Shi.
\newblock {Streamlet}: Textbook streamlined blockchains.
\newblock In {\em {ACM} Advances in Financial Technologies}, 2020.

\bibitem{chan+:pala}
T-H.~Hubert Chan, Rafael Pass, and Elaine Shi.
\newblock {PaLa}: A simple partially synchronous blockchain.
\newblock Cryptology ePrint Archive, Paper 2018/981, 2018.

\bibitem{cousineau+:tlaps}
Denis Cousineau, Damien Doligez, Leslie Lamport, Stephan Merz, Daniel Ricketts,
  and {Hern\'{a}n} Vanzetto.
\newblock {TLA}+ proofs.
\newblock In {\em International Symposium on Formal Methods}, volume 7436 of
  {\em Lecture Notes in Computer Science}. Springer, 2012.

\bibitem{duan+:beat}
Sisi Duan, Michael~K. Reiter, and Haibin Zhang.
\newblock {BEAT}: Asynchronous {BFT} made practical.
\newblock In {\em {ACM} {SIGSAC} Conference on Computer and Communications
  Security}, January 2018.

\bibitem{dwork+:consensus-partial-synchrony}
Cynthia Dwork, Nancy Lynch, and Larry Stockmeyer.
\newblock Consensus in the presence of partial synchrony.
\newblock {\em Journal of the {ACM}}, 35(2), 1988.

\bibitem{fischer+:flp}
Michael~J. Fischer, Nancy~A. Lynch, and Michael~S. Paterson.
\newblock Impossibility of distributed consensus with one faulty process.
\newblock {\em Journal of the {ACM}}, 32(2):374--382, April 1985.

\bibitem{hawblitzel+:ironfleet}
Chris Hawblitzel, Jon Howell, Manos Kapritsos, Jacob~R. Lorch, Bryan Parno,
  Michael~L. Roberts, Srinath Setty, and Brian Zill.
\newblock {IronFleet}: Proving practical distributed systems correct.
\newblock In {\em {ACM} Symposium on Operating Systems Principles}, Monterey,
  California, October 2015.

\bibitem{kellomaki:paxos-superposition-pvs}
P.~{Kellom\"{a}ki}.
\newblock An annotated specification of the consensus protocol of {Paxos} using
  superposition in {PVS}.
\newblock Technical Report~36, Tampere University of Technology, 2004.

\bibitem{lamport:paxos}
Leslie Lamport.
\newblock The part-time parliament.
\newblock {\em {ACM} Transactions on Computer Systems}, 16(2), May 1998.

\bibitem{lamport:byzantizing-paxos-by-refinement}
Leslie Lamport.
\newblock Byzantizing {Paxos} by refinement.
\newblock In {\em International Conference on Distributed Computing}, 2011.

\bibitem{lamport+:byzantine-generals}
Leslie Lamport, Robert Shostak, and Marshall Pease.
\newblock The {Byzantine} generals problem.
\newblock {\em {ACM} Transactions on Programming Languages and Systems}, 4(3),
  July 1982.

\bibitem{leino:dafny}
K.~Rustan~M. Leino.
\newblock {Dafny}: an automatic program verifier for functional correctness.
\newblock In {\em 16th International Conference on Logic for Programming,
  Artificial Intelligence, and Reasoning}, 2010.

\bibitem{miller+:honey-badger}
Andrew Miller, Yu~Xia, Kyle Croman, Elaine Shi, and Dawn Song.
\newblock The honey badger of {BFT} protocols.
\newblock In {\em {ACM} {SIGSAC} Conference on Computer and Communications
  Security}, pages 31--42, October 2016.

\bibitem{nakamoto:bitcoin}
Satoshi Nakamoto.
\newblock Bitcoin: A peer-to-peer electronic cash system.
\newblock Disseminated to The Cryptography Mailing List, November 2008.

\bibitem{owre+:pvs}
S.~Owre, J.~M. Rushby, and N.~Shankar.
\newblock {PVS}: A prototype verification system.
\newblock In {\em International Conference on Automated Deduction}, Saratoga
  Springs, New York, 1992.

\bibitem{pease+:agreement-in-the-presence-of-faults}
Marshall Pease, Robert Shostak, and Leslie Lamport.
\newblock Reaching agreement in the presence of faults.
\newblock {\em Journal of the {ACM}}, 27(2), April 1980.

\bibitem{schneider:state-machine-approach}
Fred~B. Schneider.
\newblock Implementing fault-tolerant services using the state machine
  approach: A tutorial.
\newblock {\em {ACM} Computing Surveys}, 22(4), December 1990.

\bibitem{visa:average-tps-2015}
{Visa Corporation}.
\newblock Visa {Inc.}\ at a glance.
\newblock Marketing document at
  \url{https://usa.visa.com/dam/VCOM/download/corporate/media/visa-fact-sheet-Jun2015.pdf},
  2015.
\newblock Retrieved July 2020.

\bibitem{visa:surge-tps-2017}
{Visa Corporation}.
\newblock Visa fact sheet.
\newblock Marketing document at
  \url{https://usa.visa.com/dam/VCOM/download/corporate/media/visanet-technology/aboutvisafactsheet.pdf},
  2020.
\newblock Retrieved July 2020.

\bibitem{yin+:hotstuff}
Maofan Yin, Dahlia Malkhi, Michael~K. Reiter, Guy~Golan Gueta, and Ittai
  Abraham.
\newblock {HotStuff}: {BFT} consensus with linearity and responsiveness.
\newblock In {\em {ACM} Symposium on Principles of Distributed Computing},
  Toronto, Canada, 2019.

\end{thebibliography}

\end{document}